\pdfoutput=1
\documentclass[10pt,pra,aps,superscriptaddress,twocolumn]{revtex4-1}
\usepackage{amsmath,bbm}
\usepackage{amsthm}
\usepackage{amsmath}
\usepackage{latexsym}
\usepackage{amssymb}
\usepackage{float}
\usepackage{graphicx}           
\usepackage{color}
\usepackage{mathpazo}
\usepackage{comment}
\usepackage{enumitem}
\usepackage{multirow}
\usepackage{setspace}
\usepackage{ulem}
\usepackage[colorlinks=true,linkcolor=blue,citecolor=magenta,urlcolor=blue]{hyperref}
\allowdisplaybreaks



\newcommand{\be}{\begin{equation}}
\newcommand{\ee}{\end{equation}}
\newcommand{\bea}{\begin{eqnarray}}
\newcommand{\eea}{\end{eqnarray}}

\def\squareforqed{\hbox{\rlap{$\sqcap$}$\sqcup$}}
\def\qed{\ifmmode\squareforqed\else{\unskip\nobreak\hfil
\penalty50\hskip1em\null\nobreak\hfil\squareforqed
\parfillskip=0pt\finalhyphendemerits=0\endgraf}\fi}
\def\endenv{\ifmmode\;\else{\unskip\nobreak\hfil
\penalty50\hskip1em\null\nobreak\hfil\;
\parfillskip=0pt\finalhyphendemerits=0\endgraf}\fi}

\newcommand{\tr}{\text{Tr}}
\newcommand{\I}{\mathbbm{1}}

\newcommand{\od}{\Tilde{d}}

\makeatletter
\newtheorem*{rep@theorem}{\rep@title}
\newcommand{\newreptheorem}[2]{%
\newenvironment{rep#1}[1]{%
 \def\rep@title{#2 \ref{##1}}%
 \begin{rep@theorem}}%
 {\end{rep@theorem}}}
\makeatother

\newreptheorem{theorem}{Theorem}
\newtheorem{lemma}{Lemma}

\newtheorem{fact}{Fact}

\newtheorem{result}{Result}

\begin{document}


\title{Measurement incompatibility and quantum advantage in communication}

\author{Debashis Saha}
\affiliation{S. N. Bose National Centre for Basic Sciences, Block JD, Sector III, Salt Lake, Kolkata 700106, India}
\affiliation{School of Physics, Indian Institute of Science Education and Research Thiruvananthapuram, Kerala 695551, India}
\author{Debarshi Das}
\affiliation{S. N. Bose National Centre for Basic Sciences, Block JD, Sector III, Salt Lake, Kolkata 700106, India}
\affiliation{Department of Physics and Astronomy, University College London, Gower Street, WC1E 6BT London, England, United Kingdom}

\author{Arun Kumar Das}
\affiliation{S. N. Bose National Centre for Basic Sciences, Block JD, Sector III, Salt Lake, Kolkata 700106, India}
\author{Bihalan Bhattacharya}
\affiliation{S. N. Bose National Centre for Basic Sciences, Block JD, Sector III, Salt Lake, Kolkata 700106, India}
\affiliation{Department of Mathematical Sciences, Indian Institute of Science Education \& Research (IISER) Berhampur,
Transit Campus, Govt. ITI, NH 59, Berhampur 760 010, Ganjam, Odisha, India}
\author{A. S. Majumdar}
\affiliation{S. N. Bose National Centre for Basic Sciences, Block JD, Sector III, Salt Lake, Kolkata 700106, India}


\begin{abstract}
  Measurement incompatibility stipulates the existence of quantum measurements that cannot be carried out simultaneously on single systems. We show that the set of input-output probabilities obtained from $d$-dimensional classical systems assisted with shared randomness is the same as the set obtained from $d$-dimensional quantum strategies restricted to compatible measurements with shared randomness in any communication scenario. Thus, measurement incompatibility is necessary for quantum advantage in communication, and \textit{any} quantum advantage (with or without shared randomness) in communication acts as a witness to the incompatibility of the measurements at the receiver's end in a semi-device-independent way. We introduce a class of communication tasks - a general version of random access codes -  to witness incompatibility of an arbitrary number of quantum measurements with arbitrary outcomes acting on $d$-dimensional systems, and provide generic upper bounds on the success metric of these tasks for compatible measurements. We identify all sets of three incompatible rank-one projective qubit measurements that random access codes can witness. Finally, we present the generic relationship between different sets of probability distributions - classical, quantum with or without shared randomness, and quantum restricted to compatible measurements with or without shared randomness - produced in communication scenarios.
\end{abstract}

\maketitle


\section{Introduction} 

In  quantum theory, a set of quantum measurements is called incompatible if these measurements cannot be performed simultaneously on a single copy of a quantum system \cite{review}. The best known example of quantum incompatibility 
pertains to the measurements of position and momentum of a quantum mechanical particle that cannot be measured simultaneously with arbitrary precision.  The notion of measurement incompatibility is an inherent property of quantum theory that differentiates it from  classical physics. 
Quantum measurement incompatibility is at the root of demonstrating various fundamental quantum aspects such as, Bell-nonlocality \cite{Fine1982,PRL103230402}, Einstein–Podolsky–Rosen steering \cite{PhysRevLett.113.160403,PhysRevLett.113.160402,tp1,PhysRevLett.115.230402,tp2}, measurement uncertainty
relations \cite{mur, MDM17, saha2020}, quantum contextuality \cite{LIANG20111,Xu2019}, quantum violation of macrorealism \cite{MM16,PhysRevA.100.042117}, and temporal
and channel steering \cite{Karthik:15,PhysRevA.91.062124,PhysRevA.97.032301}. 

 Bell-violation is the most compelling operational witness of incompatible measurements since it relies only on the input-output statistics of bipartite systems \cite{PRL103230402,PhysRevA.87.052125,math4030052}. Measurement incompatibility can also be witnessed through Einstein–Podolsky–Rosen steering  \cite{PhysRevLett.113.160403,PhysRevLett.113.160402,PhysRevLett.115.230402,PhysRevLett.116.240401,sarkar}. These protocols, however, rely on entanglement. Recently, witnessing of quantum measurement incompatibility in the prepare-and-measure scenario based on a state discrimination task  \cite{PhysRevLett.122.130402} has been proposed. It is particularly noteworthy that measurement incompatibility is necessary but not sufficient for Bell-violations employing fully untrusted devices \cite{Bene_2018,PhysRevA.97.012129}, whereas incompatibility is shown to be necessary as well as sufficient in steering with  one-sided trusted devices \cite{PhysRevLett.113.160402,PhysRevLett.115.230402}, and in state discrimination tasks with fully trusted preparations  \cite{PhysRevLett.122.130403} (see also  \cite{PhysRevA.98.012126,PhysRevA.100.042308,ioannou}). 

Nonetheless, the generic link between measurement incompatibility and nonclassical correlations in the simplest prepare-and-measure scenario is still not fully explored. The present article is motivated towards filling this 
crucial gap in the  literature. Moreover, the results presented here address the issue as to whether incompatible quantum measurements are necessary for probing quantum advantage in any communication scenario. Apart from answering this fundamental question in the affirmative, this work further aims to provide an operational witness of incompatibility for any set of quantum measurements of an arbitrary setting - 
any set of arbitrary number of measurements acting on an arbitrary but finite dimension wherein different measurements have different arbitrary number of outcomes. 


Specifically, we consider the communication scenario consisting of two players, say, Alice (sender) and Bob (receiver). 
Alice and Bob are given inputs such that each player does not know the input of the other player.  Alice, upon receiving her input, sends classical or quantum communication to Bob. Bob, upon receiving his input  and the communication sent by Alice, produces the outputs. In this scenario, we show that the input-output statistics obtained from $d$-dimensional classical systems assisted with unlimited shared randomness are the same as those obtained from the $d$-dimensional quantum states and compatible quantum measurements with unlimited shared randomness.
Therefore, any quantum advantage (with or without shared randomness) in communication task in the prepare-and-measure scenario serves as a tool to witness measurement incompatibility in a semi-device-independent way - assuming nothing about the quantum states and measurements except their dimension. Furthermore, we point out that whenever the figure of merit of any task is a convex function of the input-output statistics, its maximum value in classical communication is the same as in quantum communication with compatible measurements. 

Subsequently, we focus on Random Access Codes (RACs), a specific quantum communication task in the prepare-and-measure scenario. Based on the operational figure of merit of this task, we propose a witness of measurement incompatibility of a set of arbitrary number of quantum measurements having arbitrary number of outcomes acting on an arbitrary but finite dimensional state. We derive upper bounds (or, exact values in specific cases) of the average success probability of RAC assisted by the best classical strategy, or equivalently the best quantum strategy involving compatible measurements by the receiver. It follows that, given any set of quantum measurements, if the average success probability of RAC involving the given measurements by the receiver exceeds the above bound, then we can certify that the given measurements are incompatible. Here, it should be noted that RACs, being one of the fundamental quantum communication protocols, have been implemented in a series of experiments \cite{Spekkens2009,exprac2,armin2015,racexpt3,distributedrac,parallelrac}. Hence, the results presented in this work reveal the unrecognized fact that measurement incompatibility has been witnessed in these experiments.
Further, we identify all sets of three incompatible rank-one projective qubit measurements that can be witnessed by RACs. 

Finally, we present the generic relationship between different types of input-output statistics - classical, quantum with shared randomness, quantum without shared randomness, quantum restricted to compatible measurements with shared randomness, and quantum restricted to compatible measurements without any shared randomness.


We arrange the rest of this paper as follows. In the next Sec. \ref{sec2}, we introduce the formal definitions of incompatible quantum measurements. In Sec. \ref{sec3}, we show that incompatible measurements by the receiver are necessary for quantum advantage in any communication task. Next, considering RACs, we present an operational witness of  measurement incompatibility of an arbitrary set of quantum measurements in Sec. \ref{sec4}. Relationships between different types of probability distributions in communication tasks are depicted in Sec. \ref{sec5}. Finally, we conclude with a short discussion in Sec. \ref{sec6}.


\section{Quantum Measurement Incompatibility}\label{sec2}

An arbitrary measurement is conceptualized by some Positive Operator-Valued Measure (POVM) defined as  $E_y \equiv \{M_{b_y|y}\}_{b_y}$ with $M_{b_y|y} \geqslant 0$ for all $b_y$ and $\sum_{b_y} M_{b_y|y} = \I$. Here, $y$ corresponds to the choice of measurement, and $b_y$ denotes the outcomes of measurement $y$. 

A set of measurements, $\{E_y\}_{y}$, with $y \in [n]$ (here we use the notation $[k]:=\{1,\dots,k\}$), is compatible \cite{review} 
if there exists a parent POVM $\{G_{\kappa}: G_{\kappa} \geqslant 0 \, \forall \kappa, \, \sum_{\kappa} G_\kappa = \I \}$ and classical post-processing for each $y$ given by $\{P_y(b_y|\kappa)\}$ such that
\be \label{com}
\forall b_y,y, \quad M_{b_y|y} = \sum_\kappa P_y(b_y|\kappa) G_\kappa .
\ee 
Post-processing for each $y$ is defined by $\{P_y(b_y|\kappa)\}$ such that
\be 
P_y(b_y|\kappa) \geqslant 0 \, \, \forall y, b_y, \kappa ; \, \, \, \,  \sum_{b_y} P_y(b_y|\kappa) = 1 \, \, \forall y, \kappa.
\ee 
We note that the above definition of compatibility is equivalent to the existence of a parent POVM $\{G_\kappa\}$ whose appropriate marginals give rise to all the individual measurement effects $\{M_{b_y|y}\}_{b_y,y}$ \cite{review}.


\section{Incompatibility is necessary for quantum advantage in communication tasks}\label{sec3}

Now, we will show that incompatible measurements are necessary for showing quantum advantage in any communication task. Before proceeding, let us briefly describe a generic communication scenario consisting of two players - Alice and Bob. Alice and Bob are given inputs $x\in [l]$ and $y\in[n]$, respectively. Further,
initially neither player has any idea about the other player's input. Alice, upon receiving the input $x$ sends a $d$-dimensional classical or quantum system to Bob, where $d$ is finite. Bob, upon receiving the input $y$ and the message (which is $d$-dimensional classical or quantum system) sent by Alice, outputs $b_y\in [d_y]$ wherein $d_y$ is the number of outcomes for input $y$. The figure of merit of this communication task is determined by the set of probability distributions, $\{p(b_y |x,y)\}$.

In classical communication, they can use unlimited pre-shared randomness $\lambda$ with some probability distribution $\pi(\lambda)$, and, therefore, any typical probability can be expressed as
\be \label{pc}
p(b_y|x,y) = \sum_{m=1}^d \int_\lambda \pi(\lambda) p_a(m|x,\lambda)p_b(b_y|y,m,\lambda)\  d\lambda .
\ee 
Here, $\{p_a(m|x,\lambda)\},\{p_b(b_y|y,m,\lambda)\}$ are encoding and decoding functions by Alice and Bob, satisfying non-negativity and 
\be 
\sum_m p_a(m|x,\lambda) = \sum_{b_y} p_b(b_y|y,m,\lambda) =1 . 
\ee 
While in quantum communication, one can consider the set of probabilities with or without the pre-shared classical randomness. Suppose $\mathcal{B}(\mathbbm{C}^d)$ stands for the space of all operators acting on $d$ dimensional complex Hilbert space. 
In the former scenario with pre-shared randomness  
\bea \label{pq}
p(b_y|x,y) = \int_\lambda \pi(\lambda) \tr \left( \rho_{x,\lambda} M_{b_y|y,\lambda} \right) d\lambda, \nonumber \\
\rho_{x,\lambda}, \  M_{b_y|y,\lambda} \in \mathcal{B}(\mathbbm{C}^d)   ,
\eea 
where $\{\rho_{x,\lambda}\}$ is the quantum state sent by Alice upon input $x$ and random variable $\lambda$, and $\{M_{b_y|y,\lambda}\}$ is the measurement executed by Bob for his input $y$ and random variable $\lambda$.
Without the pre-shared randomness the expression of the probabilities reduces to,
\be \label{pqb}
p(b_y|x,y) = \tr (\rho_xM_{b_y|y} ), \quad \rho_x, M_{b_y|y} \in \mathcal{B}(\mathbbm{C}^d) .
\ee 
A communication scenario is specified by a set of natural numbers $l, n,$ and $\Vec{d} = (d_1, \cdots, d_n )$ such that $x \in [l]$, $y \in [n]$, $b_y \in [d_y]$. Given a scenario, we define the set of all probabilities obtainable by \textit{$d$-dimensional classical communication,} 
\be 
\mathcal{C}_d := \{p(b_y|x,y)\},
\ee 
where $p(b_y|x,y)$ is given by Eq. \eqref{pc}; the set of all probabilities in \textit{$d$-dimensional  quantum communication,}
\be 
\mathcal{Q}_d := \{p(b_y|x,y)\}
\ee 
where $p(b_y|x,y)$ is given by Eq. \eqref{pq}; and the set of all probabilities in \textit{$d$-dimensional quantum communication without randomness,} 
\be 
\mathcal{\overline{Q}}_d := \{p(b_y|x,y)\}
\ee 
where $p(b_y|x,y)$ is given by Eq. \eqref{pqb}.
In this work, we are interested in another two sets of probabilities. First, the set of all probabilities in \textit{$d$-dimensional  quantum communication restricted to compatible measurements,}
\be \label{QCd}
\mathcal{Q}^C_d := \{p(b_y|x,y)\}, 
\ee 
where $p(b_y|x,y)$ is given by Eq. \eqref{pq} such that the set of measurements acting on $d$-dimensional quantum states used by Bob $\{M_{b_y|y,\lambda}\}$ is compatible, that is, there exists parent POVM $\{G_{\kappa}\}$ such that
\be \label{comMlambda}
\forall b_y,y,\lambda, \quad M_{b_y|y,\lambda} = \sum_{\kappa} P_{y,\lambda}(b_y|\kappa) \ G_{\kappa}
\ee 
and
\be 
P_{y,\lambda}(b_y|\kappa) \geqslant 0 \, \, \forall y, \lambda, b_y, \kappa ; \, \, \, \,  \sum_{b_y} P_{y,\lambda}(b_y|\kappa) = 1 \, \, \forall y, \lambda, \kappa.
\ee
And second, the set of all probabilities in \textit{$d$-dimensional quantum communication restricted to compatible measurements without shared randomness,}
\be 
\mathcal{\overline{Q}}^C_d := \{p(b_y|x,y)\}, 
\ee 
where $p(b_y|x,y)$ is given by Eq. \eqref{pqb}, such that the set of measurements $\{M_{b_y|y}\}$ is compatible according to \eqref{com}. 
\begin{result}\label{result:QcC}
 Given any scenario specified by $(l,n,\Vec{d} )$,
\be
\mathcal{\overline{Q}}_d^C \subseteq \mathcal{Q}^C_d = \mathcal{C}_d .
\ee 
Thus, measurement incompatibility is necessary for any quantum advantage (with or without shared randomness) over classical communication.
\end{result}
\begin{proof}
The first relation, $\mathcal{\overline{Q}}_d^C \subseteq \mathcal{Q}^C_d$, follows from the definition of these two sets. The nontrivial part of this result is proving the equality.

Consider the case where Bob performs POVM measurements $\{G_{\kappa}\}$, which is the parent POVM of the measurement set $\{M_{b_y|y,\lambda}\}$. The 
Frenkel-Weiner theorem \cite{Frenkel2015} implies that the set of input-output probabilities $p(\kappa|x)$ with a single quantum measurement on $d$-dimensional quantum states can always be reproduced by a suitable classical $d$-dimensional communication in the presence of shared randomness. In other words, $\forall \rho_{x,\lambda}$ there exists classical strategy $\tilde{\pi}(\tilde{\lambda}),p_a(m|x,\lambda,\tilde{\lambda}),p_b(\kappa|m,\tilde{\lambda})$ such that
\be  \label{rxgk}
\tr (\rho_{x,\lambda} G_{\kappa}) = \sum_{m=1}^d \int_{\tilde{\lambda}} \tilde{\pi}(\tilde{\lambda}) p_a(m|x,\lambda,\tilde{\lambda})p_b(\kappa|m,\tilde{\lambda}) \ d\tilde{\lambda} .
\ee
Here, note that, in the scenario considered by Frenkel-Weiner \cite{Frenkel2015}, Bob does not receive any input $y$ therein. That is why Bob's output $\kappa$ depends only on the message $m$ sent by Alice and classical shared randomness $\tilde{\lambda}$. 

Let us now focus on the set of probabilities $\mathcal{C}_d$ wherein $(\lambda,\tilde{\lambda})$ is the pre-shared randomness. Take into account the following decoding function,
\be \label{newde}
p_b(b_y|y,m,\lambda,\tilde{\lambda}) = \sum_{\kappa} P_{y,\lambda}(b_y|\kappa) p_b(\kappa|m,\tilde{\lambda}) ,
\ee 
where $\{P_{y,\lambda}(b_y|\kappa)\}$ is the post-processing defined in \eqref{comMlambda} and $p_b(\kappa|m,\tilde{\lambda})$ is given in \eqref{rxgk}. One can check that this is indeed a valid decoding function.

Next, we show that an arbitrary $p(b_y|x,y) \in \mathcal{Q}^C_d$ can always be reproduced by a suitable classical strategy \eqref{pc} involving pre-shared randomness $(\lambda,\tilde{\lambda})$ and the decoding function \eqref{newde}. 
With the help of \eqref{comMlambda}, \eqref{rxgk}, and \eqref{newde} in a subsequent order, we find
\bea
&& \int_\lambda \pi(\lambda) \tr \left(\rho_{x,\lambda} M_{b_y|y,\lambda}\right) \ d\lambda \nonumber \\
 & =& \sum_{\kappa} \int_\lambda    P_{y,\lambda}(b_y|\kappa) \pi(\lambda) \tr (\rho_{x,\lambda} G_{\kappa} ) \ d\lambda  \nonumber \\
& =& \sum_{m} \int_{\lambda} \int_{\tilde{\lambda}} \pi(\lambda) \tilde{\pi}(\tilde{\lambda}) p_a(m|x,\lambda,\tilde{\lambda}) \nonumber \\
 && \qquad \qquad \times \left( \sum_{\kappa}  P_{y,\lambda}(b_y|\kappa) p_b(\kappa|m,\tilde{\lambda}) \right) d\lambda \   d\tilde{\lambda} 
\nonumber \\
&=& \sum_{m} \int_\lambda \int_{\tilde{\lambda}} \pi(\lambda) \tilde{\pi}(\tilde{\lambda}) p_a(m|x,\lambda,\tilde{\lambda}) p_b(b_y|y,m,\lambda,\tilde{\lambda}) d\lambda   d\tilde{\lambda}. \nonumber \\
\eea 
Therefore, an arbitrary probability distribution $p(b_y|x,y)$ obtainable from a compatible set of measurements can be reproduced by a suitable classical strategy, implying that $\mathcal{Q}^C_d \subseteq \mathcal{C}_d$. 
On the other hand, any classical strategy is always realized by quantum strategy with compatible measurements, i.e., $\mathcal{C}_d \subseteq  \mathcal{Q}^C_d $. Therefore, these two sets are identical. 

Finally, we remark that the figure of merit of any communication task is some arbitrary function of the probabilities $p(b_y|x,y)$, we can infer that any quantum advantage (with or without shared randomness) in such tasks over classical communication can be attained only if the set of measurements is incompatible. This completes the proof.
\end{proof} 
Note that a weaker version of \textit{Result 1}, $\mathcal{\overline{Q}}_d^C \subseteq \mathcal{C}_d $, has been proposed in Ref. \cite{carlos}. However, from the arguments presented therein, it is not clear how their Eq.(4) follows, which raises questions upon the status of the last paragraph of the proof of their Theorem 1 presented in the
Appendix A of \cite{carlos}.

One profound implication of \textit{Result} \ref{result:QcC} is that any communication task can serve as a witness of measurement incompatibility. 
However, in general, \textit{measurement incompatibility is not sufficient for quantum advantage without pre-shared randomness} \cite{PRXQuantum2021} -- there exist incompatible qubit measurements such that the set of probabilities given by \eqref{pqb} for arbitrary quantum states is within $\mathcal{C}_2$ (see Sec. IV-A of \cite{PRXQuantum2021}).

Another useful implication of \textit{Result} \ref{result:QcC} appears when we are interested in linear functions of $\{p(b_y|x,y)\}$. Communication tasks for which the figures of merit are linear functions of $\{p(b_y|x,y)\}$ are widespread due to their practical importance in quantum communication complexity tasks \cite{RevModPhys.82.665,DEWOLF,saha2019}, quantum key distribution \cite{Marcin2011}, quantum randomness generation \cite{randomness,randomness1}, quantum random access codes \cite{brac,Carmeli2020}, oblivious transfer \cite{Spekkens2009,saha2019}, and many other applications.
\begin{fact}\label{result2}
Given any scenario specified by $(l,n,\Vec{d} )$, the maximum values of any linear function of $\{p(b_y|x,y)\}$ obtained within the three different sets $\mathcal{C}_d$, $\mathcal{Q}^C_d,$ and $\mathcal{\overline{Q}}^C_d$ are the same.
\end{fact}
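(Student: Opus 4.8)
The plan is to leverage \textit{Result} \ref{result:QcC} to collapse the three-way comparison into a single nontrivial inequality. That result already gives $\mathcal{Q}^C_d = \mathcal{C}_d$, so the maxima of \emph{any} function (linear or not) over these two sets automatically agree; it therefore suffices to compare $\overline{\mathcal{Q}}^C_d$ with $\mathcal{C}_d$. Moreover, from the inclusion $\overline{\mathcal{Q}}^C_d \subseteq \mathcal{Q}^C_d = \mathcal{C}_d$, one direction is immediate: every linear functional $f$ obeys $\max_{\overline{\mathcal{Q}}^C_d} f \leqslant \max_{\mathcal{C}_d} f$. The entire content of the statement thus reduces to establishing the reverse inequality $\max_{\mathcal{C}_d} f \leqslant \max_{\overline{\mathcal{Q}}^C_d} f$.

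For the reverse direction I would exploit convexity together with linearity. Because $x,m,y,b_y$ all range over finite sets, $\mathcal{C}_d$ is a convex polytope: integrating over the shared randomness $\lambda$ realizes arbitrary convex combinations, so $\mathcal{C}_d$ is exactly the convex hull of the \emph{deterministic} classical strategies, in which Alice's encoding $x\mapsto m$ and Bob's decoding $(y,m)\mapsto b_y$ are both functions. Since a linear functional on a polytope attains its maximum at an extreme point, the optimum over $\mathcal{C}_d$ is achieved at some deterministic strategy, specified by an encoding $e:[l]\to[d]$ and decodings $g_y:[d]\to[d_y]$, producing $p(b_y|x,y)=\delta_{b_y,\,g_y(e(x))}$.

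The crux is then to realize this deterministic vertex inside $\overline{\mathcal{Q}}^C_d$, i.e.\ with quantum states and \emph{compatible} measurements and with \emph{no} shared randomness. Fixing an orthonormal basis $\{\ket{m}\}_{m=1}^d$ of $\mathbbm{C}^d$, I would let Alice send $\rho_x=\ket{e(x)}\bra{e(x)}$ and have Bob measure with the projectors $M_{b_y|y}=\sum_{m:\,g_y(m)=b_y}\ket{m}\bra{m}$. A one-line check gives $\tr(\rho_x M_{b_y|y})=\delta_{b_y,\,g_y(e(x))}$, reproducing the deterministic point. The decisive observation is that for every $y$ these operators are simultaneously diagonal in $\{\ket{m}\}$, hence they commute and are automatically compatible: the parent POVM is $G_\kappa=\ket{\kappa}\bra{\kappa}$ with $\kappa\in[d]$, and the deterministic post-processing $P_y(b_y|\kappa)=\delta_{b_y,\,g_y(\kappa)}$ recovers each $M_{b_y|y}$ via Eq.~\eqref{com}, all without any shared randomness. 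Consequently the optimizing vertex of $\mathcal{C}_d$ lies in $\overline{\mathcal{Q}}^C_d$, yielding $\max_{\mathcal{C}_d} f \leqslant \max_{\overline{\mathcal{Q}}^C_d} f$ and, combined with the earlier inequalities, forcing all three maxima to coincide.

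I expect the main obstacle to be the convexity step of the second paragraph — in particular, cleanly justifying that $\mathcal{C}_d$ is the convex hull of the finitely many deterministic strategies, so that linear optimization provably reduces to a vertex. The quantum construction in the last paragraph is then a short verification, resting only on the elementary fact that measurements sharing a common eigenbasis are compatible; the novelty of the \emph{Fact} lies precisely in noting that linearity lets one dispense with shared randomness, which is otherwise essential for the equality $\mathcal{Q}^C_d=\mathcal{C}_d$.
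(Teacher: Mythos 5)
Your proposal is correct and follows essentially the same route as the paper: reduce the linear optimization over $\mathcal{C}_d$ to deterministic encoding/decoding strategies (the paper does this via its \emph{Lemma} 1 in Appendix \ref{app2}, by explicit optimization rather than your polytope/extreme-point phrasing, but the content is identical), then realize the optimal deterministic strategy inside $\mathcal{\overline{Q}}^C_d$ with computational-basis states and a single computational-basis parent measurement $\{G_\kappa\}$ followed by $y$-dependent post-processing, and finish with $\mathcal{\overline{Q}}_d^C \subseteq \mathcal{Q}^C_d = \mathcal{C}_d$ from \textit{Result} \ref{result:QcC}. The convexity step you flag as the main obstacle is exactly what the paper's \emph{Lemma} 1 supplies, so no gap remains.
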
 
\begin{proof}
To find the optimum value of any linear function of $\{p(b_y|x,y)\}$, it is sufficient to consider classical strategy without shared randomness (see \textit{Lemma} 1 in Appendix \ref{app2} for a detailed explanation). Now,  all probability distributions $\{p(b_y|x,y)\}$, which are obtained from classical strategy without shared randomness, can always be realized within $\mathcal{\overline{Q}}^C_d$. Upon receiving the input $x$, Alice sends the quantum state $\rho_x$ such that $\rho_x$ is diagonal in the computational basis. Bob, upon receiving the input $y$ and the diagonal state $\rho_x$, performs a fixed measurement $\{G_\kappa \}$ in the computational basis followed by some post-processing depending on $y$. This observation together with the relation $\mathcal{\overline{Q}}_d^C \subseteq \mathcal{Q}^C_d = \mathcal{C}_d$ implies the above fact. 
\end{proof}

The generic relation among the sets $\mathcal{\overline{Q}}_d^C, \mathcal{Q}^C_d, \mathcal{\overline{Q}}_d, \mathcal{Q}_d, \mathcal{C}_d$ is further analyzed in Section \ref{sec5}. Next, we will propose incompatibility witness for an arbitrary set of measurements using a family of communication tasks, namely, the general version of RAC \cite{racambainis}.

	\begin{figure}[t!]
		\centering
		\includegraphics[scale=0.43]{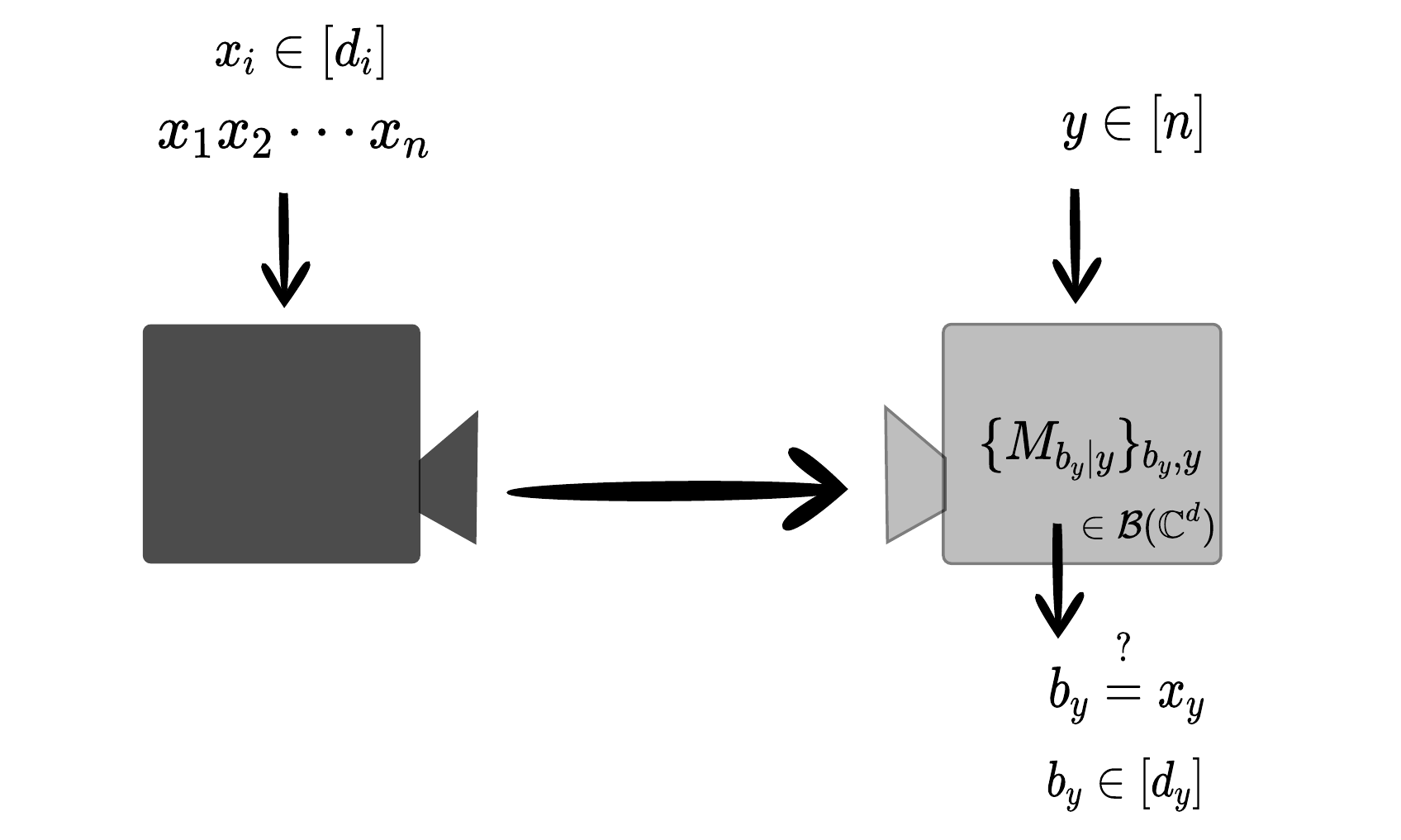}
		\caption{An unknown measurement set of arbitrary settings, $\{M_{b_{y}|y}\}_{b_y,y}$, is provided; we only know the dimension ($d$) on which this set of measurements act. Our task is to certify the incompatibility of this set of measurements. Here, the notation $[k]:= \{1,\cdots,k\}$ for any natural number $k$.}\label{fig1}
	\end{figure}

\section{Incompatibility witness for any set of measurements of arbitrary setting} \label{sec4}

Take the most general form of a set of measurements. There are $n$ measurements, defined by $\{M_{b_y|y}\}$ where $y\in [n]$ each of which has different outcomes, say, measurement $y$ has $d_y$ outcomes, that is, $b_y \in [d_y]$. These measurements are acting on $d$-dimensional quantum states where $d$ is finite (see FIG. \ref{fig1}). In order to witness incompatibility of this set, we introduce the most general form of random access codes with Bob having this set of measurements. Alice gets a string of $n$ dits $x=x_1x_2\cdots x_n$ randomly from  the set of all possible strings in which $x_y \in [d_y]$ for all $y \in [n]$. While Alice communicates a $d$-dimensional classical or quantum system to encode the information about the obtained string, the task for Bob is to guess the $y$-th dit when $y$ is chosen randomly.  The figure of merit is the average success probability defined by the following linear function
\be \label{Pndd}
S(n,\Vec{d},d) = \frac{1}{n\prod_y d_y} \sum_{x,y} p(b_y=x_y|x,y)
\ee 
which is fully specified by $n$, $\Vec{d}=(d_1,d_2,\cdots ,d_n)$, and $d$. Since Eq.\eqref{Pndd} is a linear function of $p(b_y|x,y)$, by Fact \ref{result2}, the maximum values over $\mathcal{C}_d$, $\mathcal{Q}^C_d$ and $\mathcal{\overline{Q}}^C_d$ are the same and denoted by $S_c(n,\Vec{d},d)$. Precisely,
\begin{align}
   S_c(n,\Vec{d},d) &= \max_{\{p(b_y|x,y)\} \in \mathcal{C}_d} S(n,\Vec{d},d) \nonumber \\
   &= \max_{\{p(b_y|x,y)\} \in \mathcal{Q}^C_d} S(n,\Vec{d},d).
\end{align}
Hence, $S_c(n,\Vec{d},d)$ can be evaluated by maximizing the average success probability either over all classical strategies, or over all quantum strategies involving compatible measurements only. 

Whenever a set of measurements in the scenario specified by $n$, $\Vec{d}$, $d$ gives $S(n,\Vec{d},d) > S_c(n,\Vec{d},d)$ in the above-introduced general version of RACs, we can conclude that the measurements are incompatible. Hence, in order to witness measurement incompatibility, we need to know $S_c(n,\Vec{d},d)$.
Now we present an upper bound on $S_c(n,\Vec{d},d)$ for arbitrary $n,\Vec{d},d$.
\begin{result}\label{thm:gracb}
The following relation holds true for arbitrary $(n,\Vec{d},d):$
\be \label{SC1}
S_c(n,\Vec{d},d) \leqslant \frac{1}{n}  \times \min  \Bigg\{ 1 + \sum_{\substack{i,j\\i<j}} \frac{d}{ d_id_j } \ , \ n-1+\frac{d}{\prod_y d_y} \Bigg\} .
\ee
\end{result}
This upper bound in Eq. \eqref{SC1} is obtained for $\mathcal{Q}^C_d$, that is, by taking the existence of a parent POVM of the measurements $\{M_{b_y|y}\}_{b_y,y}$ performed by Bob. The proof of this result is presented in Appendix \ref{app1}.  
When the outcome of all the measurements is the same, which is $d_y = \od $ for all $y$, the above bound simplifies to
\be 
S_c(n,\od,d) \leqslant \frac{1}{n} \times \min \left\{1 + \frac{n(n-1)d}{2\od^{2}} \ , \ n-1 +\frac{d}{\od^n} \right\} .
\ee 
Hence, in different types of RACs involving an arbitrary set of quantum measurements by Bob, if the average success probability exceeds the aforementioned upper bounds on $S_c$, then we can conclude that the measurements by Bob are incompatible. 

On the other hand, whenever
\be \label{ddy}
d \leqslant \min_y d_y
\ee 
we find out the exact value of $S_c(n,\Vec{d},d)$. 
Say, $k_i$ is the number of sets among $[d_1],\cdots,[d_n]$ such that dit $i\in [d_y]$. For example, consider the RACs with $n=4$ and $d_1=2$, $d_2=3$, $d_3=4$ and $d_4 = 3$. That is, Alice gets a string of four dits $x=x_1 x_2 x_3 x_4$ randomly, where $x_1 \in [2]$, $x_2 \in [3]$, $x_3 \in [4]$ and $x_4 \in [3]$. In this case, $k_1=4$, $k_2=4$, $k_3=3$, and $k_4=1$.
Also, we denote $d_{\max} = \max_y d_y $.

\begin{result}\label{thm:graceb}
If \eqref{ddy} holds, then
\be \label{SCex}
S_c(n,\Vec{d},d) = \frac{1}{n\prod_y d_y} \sum \left[ \left(\prod_{j=1}^{d_{\max}}  C^{\alpha_j}_{n_j} \right)\, \max_{i=1,\cdots,d}\{n_i\} \right]
\ee 
with 
\begin{align}
    \alpha_j = k_j - \sum_{i=j+1}^{d_{\max}} n_i, \quad   C^{\alpha_j}_{n_j} = \frac{\alpha_j (\alpha_j-1) \cdots (\alpha_j - n_j+1) }{n_j (n_j-1) \cdots  1}  \nonumber
\end{align}
and where the summation is taken over all possible integer solutions of the following equation 
\be 
\sum_{i=1}^{d_{\max}} n_i = n
\ee 
such that $n_i \leqslant k_i$ for all $i$. 
\end{result}
Note here that (\ref{SCex}) is obtained for $\mathcal{C}_d$ by considering  classical strategies. The detailed proof is given in Appendix \ref{app2}. For a particular case of {\it Result} \ref{thm:graceb} wherein $d_y = \od = d$ for all $y$, the proof is previously given in \cite{brac}.
Hence, when $d \leqslant \min_y d_y$,  a necessary criteria for a set of measurements to be compatible is given by
\begin{align} \label{witness-ineq}
    S(n,\Vec{d},d) \leqslant S_c(n,\Vec{d},d),
\end{align}
where $S_c(n,\Vec{d},d)$ is given by (\ref{SCex}).

For $n=2$, $d_y=\od$ for all $y$, and $d\leqslant \od$, the expression \eqref{SCex} simplifies to (for details, see Appendix \ref{app3})
\be \label{simSCex}
S_c(2,\od,d) = \frac{1}{2\od^2} \left(d + 2d\od -d^2 \right) .
\ee 
And for $n=3$, $d_y=\od$ for all $y$, and $d\leqslant \od$, the expression \eqref{SCex} simplifies to (for details, see Appendix \ref{app3})
\be \label{3_measurements}
S_c(3,\od,d) = \frac{d}{3\od^3} \left(d^2 - 1 + 3\od (\od + 1 - d) \right) .
\ee 
The particular case of \textit{Result} \ref{thm:gracb} for $n=2$ can be found in \cite{Carmeli2020}, and, moreover, it is shown that any pair of rank-one projective measurements that is incompatible provides advantages in RAC \cite{saha2020}. 
In order to showcase the generic applicability of {\it Results} \ref{thm:gracb} and \ref{thm:graceb}, we consider an arbitrary set of three rank-one projective qubit measurements, which using some unitary can be expressed as 
\bea \label{3qpm}
    M_{x_1|1} &=& (1/2) U \left[\I + (-1)^{x_1} \sigma_z \right] U^\dagger \nonumber \\
    M_{x_2|2} &=& (1/2) U \left[\I + (-1)^{x_2} \left( \alpha \sigma_z + \sqrt{1-\alpha^2} \sigma_x \right) \right] U^\dagger  \nonumber \\
    M_{x_3|3} &=& (1/2) U \Big[\I + (-1)^{x_3} \big( \beta \sigma_z + \gamma \sqrt{1-\beta^2}  \sigma_x \nonumber \\ 
    && \qquad \qquad \quad \pm \sqrt{1-\beta^2} \sqrt{1-\gamma^2}  \sigma_y \big) \Big] U^\dagger  
\eea 
where $x_1,x_2,x_3 \in [2],$ the variables $\alpha,\beta,\gamma \in [-1,1]$, and $U$ can be an arbitrary unitary operator acting on $\mathbbm{C}^2$. We obtain the following result.
\begin{result}\label{thm:322}
The figure of merit \eqref{Pndd} of RACs for $n=3,\od =2,d=2$ can witness  any set of three incompatible rank-one projective qubit measurements, except for the sets defined by (\ref{3qpm}) with
\be 
(\alpha,\beta,\gamma ) = \{ (\pm 1/2,\pm 1/2,-1),(\pm 1/2,\mp 1/2,1)\}.  \nonumber
\ee  
\end{result} 
This result is proved with the help of numerical optimizations, and the proof is 
available in Appendix \ref{app:322}. 


\section{Generic relations between probability sets}\label{sec5}


We now point out a few generic relations between the sets in order to get a generic perspective. It is trivial that $\mathcal{\overline{Q}}_d \not\subset \mathcal{C}_d$ since we observe the quantum advantage for $S(n,\Vec{d},d)$. We also observe the following,
\begin{fact}\label{fact:2}
In general, $\mathcal{C}_d \not\subset \mathcal{\overline{Q}}_d$, and thus, $\mathcal{\overline{Q}}^C_d \subsetneq \mathcal{C}_d$. In other words, there exist probabilities that belong to $\mathcal{C}_d$ but do not belong to $\mathcal{\overline{Q}}_d$. 
Moreover, in general, $\mathcal{Q}_d \setminus (\mathcal{C}_d \cup \mathcal{\overline{Q}}_d ) \neq \varnothing$. In other words, there exists a probability distribution that belongs to $\mathcal{Q}_d$ but does not belong to $\mathcal{C}_d$ and  $\mathcal{\overline{Q}}_d$.
\end{fact}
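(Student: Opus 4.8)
The plan is to prove the three assertions in the order they are stated, with all of them resting on a single structural contrast: the shared-randomness sets $\mathcal{C}_d$ and $\mathcal{Q}_d$ are convex (the variable $\lambda$ lets one form arbitrary mixtures of admissible strategies, so in fact $\mathcal{Q}_d=\mathrm{conv}(\mathcal{\overline{Q}}_d)$), whereas the no-randomness set $\mathcal{\overline{Q}}_d$ is generically \emph{non-convex}, since each of its elements is pinned to the rigid form $\tr(\rho_x M_{b_y|y})$ coming from one fixed state--measurement family. I would first record two facts that I will reuse: by \textit{Result}~\ref{result:QcC} we already have $\mathcal{Q}^C_d=\mathcal{C}_d$ and $\mathcal{\overline{Q}}^C_d\subseteq\mathcal{C}_d$, and trivially $\mathcal{\overline{Q}}^C_d\subseteq\mathcal{\overline{Q}}_d$ (compatibility is just an extra restriction on a no-randomness strategy). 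With these in hand, everything reduces to locating non-convexity of $\mathcal{\overline{Q}}_d$ and placing the witnesses correctly relative to the classical polytope.

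For the first claim, $\mathcal{C}_d\not\subset\mathcal{\overline{Q}}_d$, I would pick two points $p_1,p_2$ lying in $\mathcal{C}_d\cap\mathcal{\overline{Q}}_d$ whose midpoint $\bar p=\tfrac12(p_1+p_2)$ falls outside $\mathcal{\overline{Q}}_d$. Convexity of $\mathcal{C}_d$ forces $\bar p\in\mathcal{C}_d$, so $\bar p$ is the desired classical point with no single-state, single-measurement quantum realization. Concretely I would instantiate this in a small qubit scenario, for example the $n=3$, $d=2$ random access code of \textit{Result}~\ref{thm:322}, where the achievable quantum region is a well-understood non-convex body and the two seeds $p_1,p_2$ together with the failure of $\bar p$ to admit a quantum model can be exhibited explicitly (or certified numerically).

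The second claim then drops out as a corollary. Because $\mathcal{\overline{Q}}^C_d\subseteq\mathcal{\overline{Q}}_d$, the witness $\bar p\in\mathcal{C}_d\setminus\mathcal{\overline{Q}}_d$ just constructed also satisfies $\bar p\notin\mathcal{\overline{Q}}^C_d$; combined with $\mathcal{\overline{Q}}^C_d\subseteq\mathcal{C}_d$ from \textit{Result}~\ref{result:QcC}, this gives the strict inclusion $\mathcal{\overline{Q}}^C_d\subsetneq\mathcal{C}_d$, with nothing further to check.

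For the final claim, $\mathcal{Q}_d\setminus(\mathcal{C}_d\cup\mathcal{\overline{Q}}_d)\neq\varnothing$, I would run the same mixing idea but seed it with points that \emph{beat} the classical bound. Take $q\in\mathcal{\overline{Q}}_d\setminus\mathcal{C}_d$ with $S(q)>S_c$ (such $q$ exists by the demonstrated RAC advantage), and use a task symmetry (relabelling inputs/outcomes) to produce a second optimizer $\tilde q\in\mathcal{\overline{Q}}_d$ with the same value $S(\tilde q)=S(q)$. Set $\bar p=\tfrac12(q+\tilde q)$: convexity of $\mathcal{Q}_d$ gives $\bar p\in\mathcal{Q}_d$, and linearity of $S$ gives $S(\bar p)=S(q)>S_c$, so $\bar p\notin\mathcal{C}_d$. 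What remains is to certify $\bar p\notin\mathcal{\overline{Q}}_d$, and I expect this to be the main obstacle, since it is a non-membership statement about an optimization-defined, non-convex set. The cleanest route is to argue that the points of $\mathcal{\overline{Q}}_d$ attaining the maximal quantum value of $S$ form a finite, symmetry-related family, so that a nontrivial mixture $\bar p$ of two distinct optima—which still attains that maximal value—cannot itself lie in $\mathcal{\overline{Q}}_d$. Establishing the required control over this optimizer set in general is delicate, so I would pin it down for one explicit small instance (e.g.\ the $n=2$ or $n=3$ qubit RAC) by a direct or computer-assisted check, which is enough to justify the ``in general'' phrasing of the statement.
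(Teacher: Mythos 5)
Your overall architecture (produce a witness point in $\mathcal{C}_d\setminus\mathcal{\overline{Q}}_d$, then get $\mathcal{\overline{Q}}^C_d\subsetneq\mathcal{C}_d$ for free from $\mathcal{\overline{Q}}^C_d\subseteq\mathcal{C}_d\cap\mathcal{\overline{Q}}_d$) matches the paper, and your convexity bookkeeping ($\mathcal{C}_d$, $\mathcal{Q}_d$ convex, $\mathcal{Q}_d=\mathrm{conv}(\mathcal{\overline{Q}}_d)$) is correct. But the crux of both nontrivial claims is a \emph{non-membership} certificate for the non-convex, existentially defined set $\mathcal{\overline{Q}}_d$, and this is exactly what you leave open: you can certify membership numerically by exhibiting states and measurements, but certifying that \emph{no} $d$-dimensional realization of a given $p(b_y|x,y)$ exists cannot be settled by the ``direct or computer-assisted check'' you defer to. The paper's missing ingredient here is to replace the linear figure of merit by the \emph{worst-case} success probability $W(n,\tilde d,d)=\min_{x,y}p(b_y=x_y|x,y)$, which is concave rather than linear, so shared randomness genuinely changes its optimum. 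Known results then do all the work: by Yao's principle the maximum of $W(4,2,2)$ over $\mathcal{C}_2$ equals $S_c(4,2,2)=11/16$, whereas its maximum over $\mathcal{\overline{Q}}_2$ is $1/2$ \cite{Hayashi_2006}, so any classical point attaining $11/16$ lies outside $\mathcal{\overline{Q}}_2$; and a strategy in $\mathcal{Q}_2$ attaining $W(4,2,2)=0.74$ \cite{ambainis2009quantum} exceeds both thresholds at once, witnessing $\mathcal{Q}_2\setminus(\mathcal{C}_2\cup\mathcal{\overline{Q}}_2)\neq\varnothing$. A threshold on a concave functional is precisely the kind of analytic handle your sketch lacks.

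Your strategy for the third claim has a further, structural flaw. You propose to mix two distinct optimizers $q,\tilde q\in\mathcal{\overline{Q}}_d$ of the \emph{linear} RAC functional $S$ and argue the mixture cannot be in $\mathcal{\overline{Q}}_d$. But the set of maximizers of a linear functional over $\mathcal{Q}_d=\mathrm{conv}(\mathcal{\overline{Q}}_d)$ is a face whose extreme points are maximizers in $\mathcal{\overline{Q}}_d$; for the RACs in question these maximizers self-test, i.e., the optimal distribution is \emph{unique} (e.g., for the $2\to1$ qubit RAC every entry equals $\tfrac12(1+1/\sqrt2)$ at the optimum), so the face is a singleton and there are no two distinct optima to mix --- the relabelling symmetry you invoke maps the optimal point to itself. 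If you retreat to suboptimal points beating the classical bound, you regain multiple candidates but lose all rigidity, and the non-membership question becomes intractable again. This is why a linear figure of merit cannot separate $\mathcal{Q}_d$ from $\mathcal{C}_d\cup\mathcal{\overline{Q}}_d$ at its own maximum, and why the paper switches to the nonlinear $W$ for this Fact.
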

\begin{proof} Once again, reckon the RAC task and, instead of the average success probability \eqref{Pndd}, let us consider the figure of merit to be the worst case success probability,
\be 
W(n,\tilde{d},d) = \min_{x,y} \{p(b_y=x_y|x,y) \}.
\ee 
It follows from Yao's principle that the average success probability $S(n,2,2)$ is the same as $W(n,2,2)$ when pre-shared randomness is available. See Lemma 1 in \cite{ambainis2009quantum} for the proof. 
Therefore, $W(4,2,2)$ in $\mathcal{C}_2$ is the same as $S_c(4,2,2) = 11/16$ \cite{ambainis2009quantum}. However, it was proven that the best value of $W(4,2,2)$ in $\mathcal{\overline{Q}}_2$ is 1/2 \cite{Hayashi_2006}, implying $\mathcal{C}_d \not\subset \mathcal{\overline{Q}}_d$. Moreover, since $\mathcal{\overline{Q}}^C_d$ is a subset of both $\mathcal{C}_d$ and $\mathcal{\overline{Q}}_d$, it must be a proper subset.

Besides, there exists a quantum strategy in $\mathcal{Q}_2$ that achieves $W(4,2,2) = 0.74$ (section 4.1.2 in \cite{ambainis2009quantum}). This means that both $\mathcal{C}_d$ and $\mathcal{\overline{Q}}_d$ are proper subsets of $\mathcal{Q}_d$. 
\end{proof}

Next, we present another observation that helps in completely understanding the relationship between different sets of probability distributions.

\begin{fact}\label{fact:w322}
In general, $(\mathcal{C}_d \cap \mathcal{\overline{Q}}_d ) \setminus \mathcal{\overline{Q}}^C_d \neq \varnothing $. In other words, there exists a set of probabilities that does not belong to $\mathcal{\overline{Q}}^C_d$ at the same time inside both of the sets $\mathcal{C}_d$ and $\mathcal{\overline{Q}}_d$.
\end{fact}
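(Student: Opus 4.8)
The plan is to use the worst-case success probability $W(3,2,2)=\min_{x,y}\{p(b_y=x_y|x,y)\}$ of the $3\to 1$ random access code and to exhibit the single balanced distribution $p^*$ defined by $p^*(b_y=x_y|x,y)=3/4$ for all $x,y$, which satisfies $W(p^*)=3/4$. By \eqref{3_measurements} we have $S_c(3,2,2)=3/4$, i.e.\ $3/4$ is the largest average success attainable over $\mathcal{C}_2$ (equivalently over $\mathcal{Q}^C_2$ and $\mathcal{\overline{Q}}^C_2$). I would then establish the three memberships $p^*\in\mathcal{C}_2$, $p^*\in\mathcal{\overline{Q}}_2$, and $p^*\notin\mathcal{\overline{Q}}^C_2$, which together prove the claim.

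To place $p^*$ in $\mathcal{C}_2$ I would invoke Yao's principle \cite{ambainis2009quantum}, which (exactly as in the proof of Fact \ref{fact:2}) gives $\max_{\mathcal{C}_2}W(3,2,2)=\max_{\mathcal{C}_2}S(3,2,2)=S_c(3,2,2)=3/4$. Any distribution in $\mathcal{C}_2$ attaining $W=3/4$ must have all entries $\geqslant 3/4$ while its average is $\leqslant 3/4$, so every entry equals $3/4$; hence the maximiser is precisely $p^*$ and $p^*\in\mathcal{C}_2$. To place $p^*$ in $\mathcal{\overline{Q}}_2$ I would give an explicit qubit strategy without shared randomness: Alice sends $\rho_x=\tfrac12\big(\I+\tfrac12[(-1)^{x_1}\sigma_x+(-1)^{x_2}\sigma_y+(-1)^{x_3}\sigma_z]\big)$, the standard $3\to 1$ random access code states with Bloch vectors shrunk to length $\sqrt3/2$, and Bob measures the pairwise incompatible observables $\sigma_x,\sigma_y,\sigma_z$ for $y=1,2,3$. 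A one-line computation gives $\tr(\rho_x M_{x_y|y})=\tfrac12(1+\tfrac12)=3/4$ for every $x,y$, so this strategy realises $p^*$ and $p^*\in\mathcal{\overline{Q}}_2$.

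The crux is $p^*\notin\mathcal{\overline{Q}}^C_2$, and I expect this to be the main obstacle. No linear witness can decide it: by Fact \ref{result2} every linear functional has the same maximum over $\mathcal{\overline{Q}}^C_2$ and over $\mathcal{C}_2$, so the two sets share all supporting hyperplanes and differ only through the non-convexity of $\mathcal{\overline{Q}}^C_2$ (absence of shared randomness). I would first reduce the statement to a worst-case gap: since any $q\in\mathcal{\overline{Q}}^C_2$ with $W(q)=3/4$ would, by the same average-bound argument, be forced to equal $p^*$, we have $p^*\notin\mathcal{\overline{Q}}^C_2$ if and only if $\max_{\mathcal{\overline{Q}}^C_2}W(3,2,2)<3/4$. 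Writing a generic compatible strategy through the effective qubit observables $N_y=M_{0|y}-M_{1|y}$ gives $p(b_y=x_y|x,y)=\tfrac12\big(1+(-1)^{x_y}\tr(\rho_x N_y)\big)$, so realising $p^*$ is equivalent to solving the linear system $\tr(\rho_x N_y)=\tfrac12(-1)^{x_y}$ for all $x,y$ subject to $N_1,N_2,N_3$ being jointly measurable. The remaining task is to show this system is infeasible under joint measurability: the incompatible Pauli choice solves it, and I would argue that any jointly measurable triple either violates one of the eight sign constraints or requires Bloch vectors too short to keep all success probabilities at $3/4$.

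The genuinely hard step is this infeasibility claim — the negative statement that compatible qubit measurements without shared randomness cannot reach the balanced interior point. Because there is no simple closed criterion for joint measurability of three qubit effects (in contrast to the two-effect case), I anticipate needing either the extremal/saturation analysis of the bound in Appendix \ref{app1} — which should force an average-optimal compatible strategy to be effectively a deterministic classical encoding, hence to have some success probability equal to $0$ and therefore $W=0$ rather than $3/4$ — or a direct numerical optimisation certifying $\max_{\mathcal{\overline{Q}}^C_2}W(3,2,2)<3/4$, in the same spirit as the numerics used for Result \ref{thm:322}.
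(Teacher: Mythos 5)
Your proposal follows essentially the same route as the paper: the paper likewise works with the worst-case figure of merit $W(3,2,2)$, obtains the balanced point (all success probabilities equal to $3/4$) in $\mathcal{C}_2$ via Yao's principle and in $\mathcal{\overline{Q}}_2$ via the noisy ($\nu=\sqrt{3}/2$) Pauli/Hayashi strategy, and then excludes it from $\mathcal{\overline{Q}}^C_2$ by a numerical (semi-definite programming) bound showing $W(3,2,2)\leqslant 2/3$ for compatible measurements without shared randomness. Your reduction of the non-membership claim to $\max_{\mathcal{\overline{Q}}^C_2}W(3,2,2)<3/4$ and your fallback to numerical optimisation are exactly the paper's argument, merely spelled out in more detail.
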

\begin{proof} In $\mathcal{C}_d$, $W(3,2,2)=S_c(3,2,2)=3/4$ \cite{ambainis2009quantum}. It is also well known that there exist two-dimensional quantum states and measurements that lead to $W(3,2,2) = 1/2 + 1/(2\sqrt{3}) \approx 0.79$, without pre-shared randomness. An example of such quantum states is given in \textit{Example 2} of Ref. \cite{Hayashi_2006}. If we consider the noisy version of those quantum states $\nu \rho + (1-\nu) \I/2$ such that $\nu = \sqrt{3}/2$, then it is easy to see that the value of $W(3,2,2)$ reduces to 3/4. Therefore, $W(3,2,2)=3/4$ is obtained in both of the sets $\mathcal{C}_2$ and $\mathcal{\overline{Q}}_2$. On the other hand, a numerical analysis using semi-definite programming shows that $W(3,2,2)$ is, at most, 2/3 in $\mathcal{\overline{Q}}^C_2$, which implies $(\mathcal{C}_2 \cap \mathcal{\overline{Q}}_2 ) \setminus \mathcal{\overline{Q}}^C_2 \neq \varnothing $.
\end{proof}

Based on the above observations, the generic relationship between different sets of probabilities can be understood as depicted in FIG. \ref{fig2}.

\begin{figure}[t!]
		\centering
		\includegraphics[scale=0.29]{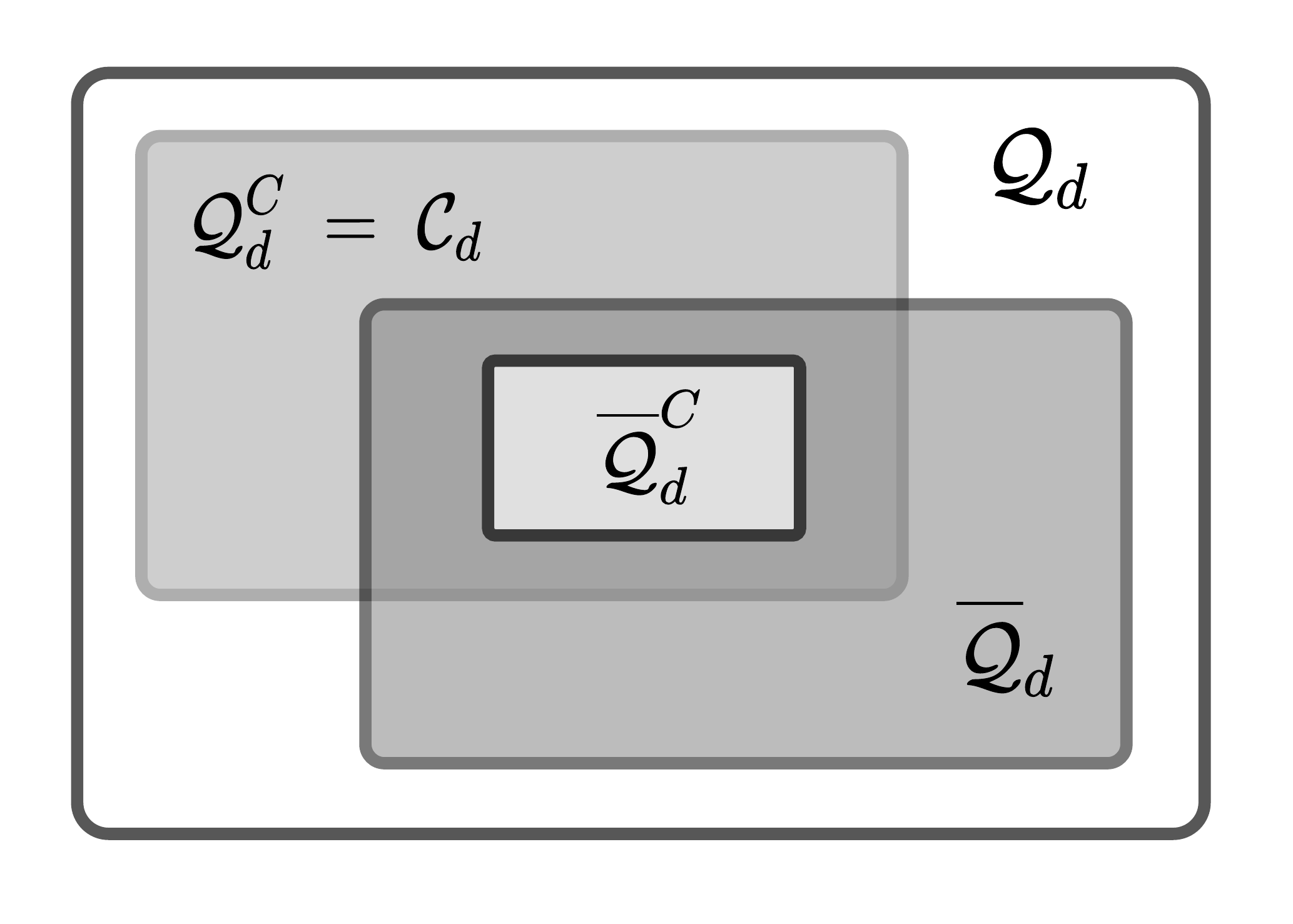}
		\caption{Generic relation between different sets of probabilities. The equivalence between $\mathcal{Q}_d^C$ and $\mathcal{C}_d$ is shown in \textit{Result 1}. There exist many examples such that $\mathcal{\overline{Q}}_d$ is not a subset of $\mathcal{C}_d$. On the other hand, Fact \ref{fact:2} points out that $\mathcal{C}_d$ is not a subset of $\mathcal{\overline{Q}}_d$ and both $\mathcal{C}_d$ as well as $\mathcal{\overline{Q}}_d$ are proper subsets of $\mathcal{Q}_d$. While it is immediate that $\mathcal{\overline{Q}}^C_d$ is a subset of both $\mathcal{C}_d$ and $\mathcal{\overline{Q}}_d$, Fact \ref{fact:w322} shows that $\mathcal{\overline{Q}}^C_d$ is not the intersection of these two sets.}\label{fig2}
\end{figure}

\section{Concluding remarks}\label{sec6}

By characterizing the set of quantum correlations in prepare-and-measure scenarios produced from any set of compatible measurements, we have shown in this article that  incompatible measurements at the receiver's end are necessary to demonstrate a quantum advantage in any communication task. Based on this result, we have presented a semi-device-independent witness of measurement incompatibility invoking generalized random access codes. Further, we have completely characterized the sets of three incompatible projective qubit measurements that can be detected using our proposed witness. The relationship between the classical probability distributions and different types of quantum probability distributions produced in an arbitrary communication task has also been presented.

It might be noted that some of the results derived in \cite{Carmeli2020,saha2020} appear as natural corollaries of the results obtained here. It has recently been shown that the measurement statistics produced in any communication task involving compatible measurements by the receiver can  be reproduced by classical communication, where the dimension of the classical communication is of the order of the number of outcomes of the parent POVM \cite{ioannou}. Significantly, our results further reduce  the dimension of the classical communication involved.

The importance of the analysis presented here lies in the fact that the classical bound of the  success metric of any communication task becomes an upper bound on the metric of the task under a compatible set of measurements. Consequently, violating the classical bound of any communication task can be used as a sufficient criterion to witness measurement incompatibility.
These bounds are tight whenever the dimension on which the measurements act is not larger than the number of outcomes of any of the measurements. The present study establishes that measurement incompatibility is the fundamental quantum resource for non-classicality in any communication task or, more generally, in prepare-and-measure scenarios.

Our analysis paves the way for investigations of several open
questions. First, deriving more efficient incompatibility witnesses based on different communication tasks is worthwhile for future studies. Second, our results may be generalized to propose  semi-device-independent witnesses for incompatible quantum channels \cite{doi:10.1063/1.5126496} and quantum instruments \cite{IncomQI,PhysRevA.105.052202}. 
Finally, proposing operational witnesses for all incompatible extremal POVMs \cite{Sent_s_2013} is another fundamentally motivated open problem.  \\

\begin{acknowledgements}
DS and DD acknowledge the Science and Engineering Research Board (SERB), Government of India for financial support through National Post Doctoral Fellowship (PDF/2020/001682 and PDF/2020/001358, respectively). During the later phase of this work, the research of DD has been supported by the Royal Society (United Kingdom) through the Newton International Fellowship (NIF$\backslash$R$1\backslash212007$). BB acknowledges the INSPIRE fellowship from the Department of Science and Technology, Government of India. AKD and ASM acknowledge support from
Project No. DST/ICPS/QuEST/2018/98 from the Department of Science  and
Technology, Govt. of India.
\end{acknowledgements}

\bibliography{ref} 

\onecolumngrid  


\appendix 

\section{Proof of Result \ref{thm:gracb}} \label{app1}

\begin{proof}
For a given set of measurements $\{M_{b_y|y}\}$ on Bob's side, the maximum average success probability \eqref{Pndd} in quantum theory is given by \cite{saha2020}
\bea  \label{maxev}
&& \max_{\{\rho_x\}} \ \frac{1}{n \prod_y d_y}  \ \sum_{ x_{1}x_{2}\cdots x_{n}} \sum_y \tr\left(\rho_x M_{b_y=x_y|y} \right) \nonumber \\
&=& \frac{1}{n \prod_y d_y}  \ \sum_{ x_{1}x_{2}\cdots x_{n}}  \max_{\{\rho_x\}}  \tr\left(\rho_x \left(\sum_y M_{b_y=x_y|y} \right)\right) \nonumber \\
&=& \frac{1}{n \prod_y d_y} \ \sum_{ x_{1}x_{2}\cdots x_{n}} ||\chi||,
\eea 
where, 
\be 
\chi = M_{x_{1}|1} + M_{x_{2}|2}+ \cdots + M_{x_{n}|n} .
\ee 
Here, $||\chi||$ denotes the operator norm of $\chi$, which is simply the maximum eigenvalue of the operator $\chi$. Our aim is to obtain an upper bound on the expression \eqref{maxev} when $\{M_{b_y|y}\}$ are compatible. 
An alternative definition of measurement incompatibility, which is equivalent to the standard one \eqref{com}, is associated with the existence of a parent POVM whose appropriate marginals give rise to all the individual measurements \cite{review}.
Precisely, if the measurements $\{M_{b_y|y}\}$ are compatible then there exists a parent measurement,  $G\equiv \{G(b_{1},\cdots,b_{n})\}$, with $\prod_{y=1}^{n} d_y$ elements from which all the measurement operators can be reconstructed by taking marginals as follows
\be \label{marginal}
M_{x_{y}|y} =  \sum_{b_1, \cdots, b_{y-1}, b_{y+1}, \cdots, b_n} G(b_{1}, \cdots , b_{y_1}, x_y, b_{y+1}, \cdots, b_{n}),
\ee
where
\be \label{sumG}
\sum_{b_1, \cdots, b_n} G(b_{1}, \cdots , b_{y_1}, b_y, b_{y+1}, \cdots, b_{n}) =\I_{\small{d\times d}}. 
\ee 
Let us first expand $\chi$ in terms of the parent POVM using \eqref{marginal},
 \be \label{chi}
\chi = \sum_{b_{2},b_3,\cdots,b_{n}}G(x_{1},b_{2},b_3\cdots,b_{n})+\sum_{b_{1},b_{3},\cdots,b_{n}}G(b_{1},x_{2},b_{3},\cdots,b_{n})+ \cdots +
\sum_{b_{1},b_{2},b_{3},\cdots,b_{n-1}}G(b_{1},b_{2},b_{3},\cdots,b_{n-1},x_{n}).  
\ee 
Each term in the above expansion can be split into two terms in the following way
\bea \label{chi2}
\chi &=& \sum_{b_{3},b_{4},\cdots,b_{n}}G(x_{1},x_{2},b_{3}, b_4, \cdots,b_{n})+\sum_{\substack{b_{2},\cdots,b_{n}\\ b_{2}\neq x_{2}}}G(x_{1},b_{2},\cdots,b_{n}) \nonumber \\
&& + \sum_{b_{1},b_{4},\cdots,b_{n}}G(b_{1},x_{2},x_{3},b_{4},\cdots,b_{n})+\sum_{\substack{b_{1},b_{3},\cdots,b_{n}\\ b_{3}\neq x_{3}}}G(b_{1},x_{2},b_{3},\cdots,b_{n}) \nonumber \\
&& +\cdots \nonumber \\
&& + \sum_{b_{2},b_{3},\cdots,b_{n-1}}G(x_{1},b_{2},b_{3},\cdots,b_{n-1},x_{n}) + \sum_{\substack{b_{1},\cdots,b_{n-1}\\ b_{1}\neq x_{1}}}G(b_{1},\cdots,b_{n-1},x_{n}).
\eea
In the above Eq. \eqref{chi2}, there are two sums in each line, and there is a total of $n$ lines. Let us denote the first sum and the second sum in the $ i^{\text{th}}  $ line by $\mathcal{S}_1^i$ and $\mathcal{S}_2^i$ respectively, where $i \in \{1, \cdots, n\}$. Hence, Eq. \eqref{chi2} can be expressed as 
\begin{align} \label{chiS1S2}
    \chi = \sum_{i=1}^{n} \left( \mathcal{S}_1^i + \mathcal{S}_2^i \right),
\end{align}
where
\begin{align}
    \mathcal{S}_1^i = \sum_{b_1, \cdots, b_{i-1}, b_{i+2}, \cdots, b_n} G(b_1, \cdots, b_{i-1}, x_i, x_{i+1}, b_{i+2}, \cdots, b_n),
\end{align}
and 
\be 
\mathcal{S}_2^i = \sum_{\substack{b_{1},\cdots,b_{i-1},b_{i+1},\cdots,b_{n}\\ b_{i+1}\neq x_{i+1}}}G(b_{1},\cdots,b_{i-1},x_{i},b_{i+1},\cdots,b_{n}). 
\ee   
Here, the index $i$ is taken to be modulo $n$. Each $G(\cdots)$ in the above sums will be termed as an element. 

Let us now make an observation that there is no common element between the $\mathcal{S}_2^i$ and $\mathcal{S}_2^{i+1}$. The common element between $\mathcal{S}_2^i$ and $\mathcal{S}_2^j$ with $i, j \in \{1, \cdots, n\}$ and $j>i+1$  is
\begin{align}
&\sum_{\substack{b_1,\cdots , b_n\\ b_{i+1}\neq x_{i+1}\\b_{j+1}\neq x_{j+1}}} G(b_{1},\cdots,b_{i-1},x_{i},b_{i+1},\cdots,b_{j-1},x_{j},b_{j+1},\cdots,b_{n}) \nonumber \\
&\leqslant \sum^n_{\substack{k=1\\k\neq i, j}} \sum_{b_k} G(b_{1},\cdots,b_{i-1},x_{i},b_{i+1},\cdots,b_{j-1},x_{j},b_{j+1},\cdots,b_{n})
\end{align}
where the index $i,j$ is taken to be modulo $n$. 
Hence, we have
\be \label{S2i}
\sum_{i=1}^n \mathcal{S}_2^i \leqslant   \sum_{\substack{i,j \in \{1, \cdots, n\} \\j>i+1}} \bigg( \sum^n_{\substack{k=1\\k\neq i, j}} \sum_{b_k} G(b_{1},\cdots,b_{i-1},x_{i},b_{i+1},\cdots,b_{j-1},x_{j},b_{j+1},\cdots,b_{n}) \bigg) + \text{other terms with no common element}.
\ee
Next, let us focus on $\mathcal{S}_1^i$. It can be checked that
\be \label{S1i}
\sum_{i=1}^{n} \mathcal{S}_1^i = \sum_{i=1}^{n} \bigg(\sum^n_{\substack{k=1\\k\neq i, i+1}} \sum_{b_k} G(b_{1},\cdots,b_{i-1},x_{i},x_{i+1},b_{i+2},\cdots,b_{n}) \bigg).
\ee 
Replacing $\mathcal{S}_2^i$ and $\mathcal{S}_1^i$ using \eqref{S2i} and \eqref{S1i} in \eqref{chiS1S2}, we have 
\bea 
\chi &\leqslant &  \sum_{\substack{i,j \in \{1, \cdots, n\}\\i<j}} \bigg( \sum^n_{\substack{k=1 \\ k\neq i,j}} \sum_{b_k} G(b_{1},\cdots,b_{i-1},x_{i},b_{i+1},\cdots,b_{j-1},x_{j},b_{j+1},\cdots,b_{n}) \bigg) + \text{other terms with no common element} \nonumber \\
& \leqslant & \sum_{\substack{i,j \in \{1, \cdots, n\}\\i<j}} \bigg( \sum^n_{\substack{k=1 \\ k\neq i,j}} \sum_{b_k} G(b_{1},\cdots,b_{i-1},x_{i},b_{i+1},\cdots,b_{j-1},x_{j},b_{j+1},\cdots,b_{n}) \bigg) +  \sum_{b_1,\cdots , b_n}G(b_{1},\cdots,b_{n}) .
\eea  
Substituting the above expression into \eqref{maxev} and employing the triangle inequality for the norm, we find that
\bea \label{SCexpand}
S_c(n,\Vec{d},d) &\leqslant & \frac{1}{n \prod_y d_y} \Bigg( \sum_{x_{1},\cdots,x_{n}} \, \, \, \,  \sum_{\substack{i,j \in \{1, \cdots, n\}\\i<j}} \, \,  \big{|}\big{|} \sum^n_{\substack{k=1 \\ k\neq i,j}} \sum_{b_k} G(b_{1},\cdots,b_{i-1},x_{i},b_{i+1},\cdots,b_{j-1},x_{j},b_{j+1},\cdots,b_{n}) \big{|}\big{|} \nonumber \\
 && \qquad \qquad \quad + \sum_{x_{1},\cdots,x_{n}} \big{|}\big{|} \sum_{b_1,\cdots , b_n} G(b_{1}, \cdots,b_{n})\big{|}\big{|} \Bigg).
 \label{bigsum}
\eea
Due to \eqref{sumG}, the second term of the above expression can be evaluated as 
\be \label{2ndterm}
\sum_{x_{1},\cdots,x_{n}}\big{|}\big{|} \sum_{b_1,\cdots , b_n}G(b_{1}, \cdots,b_{n})\big{|}\big{|}  = \sum_{x_{1},\cdots,x_{n}}\big{|}\big{|} \I_{d\times d} \big{|}\big{|} = \prod_{y=1}^n d_y .
\ee 
%
%
Next, consider the first term in (\ref{bigsum}), given by
\begin{align}
  &\sum_{x_{1},\cdots,x_{n}} \, \, \, \,  \sum_{\substack{i,j \in \{1, \cdots, n\}\\i<j}} \, \,  \big{|}\big{|} \sum^n_{\substack{k=1 \\ k\neq i,j}} \sum_{b_k} G(b_{1},\cdots,b_{i-1},x_{i},b_{i+1},\cdots,b_{j-1},x_{j},b_{j+1},\cdots,b_{n}) \big{|}\big{|}  =   \sum_{\substack{i,j \in \{1, \cdots, n\}\\i<j}} \, \, \beta_{i,j} 
\end{align}
where
\bea \label{1stterm}
\beta_{i,j} & = & \sum_{x_{1}, \cdots,x_{n}} \big{|}\big{|} \sum^n_{\substack{k=1 \\ k\neq i,j}} \sum_{b_k} G(b_{1},\cdots,b_{i-1},x_{i},b_{i+1},\cdots,b_{j-1},x_{j},b_{j+1},\cdots,b_{n})  \big{|}\big{|} \nonumber \\
& \leqslant & \sum_{x_{1},\cdots,x_{n}} \text{Tr} \left( \sum^n_{\substack{k=1 \\ k\neq i,j}} \sum_{b_k} G(b_{1},\cdots,b_{i-1},x_{i},b_{i+1},\cdots,b_{j-1},x_{j},b_{j+1},\cdots,b_{n})  \right) \nonumber \\
&=& \sum^n_{\substack{r = 1\\ r\neq i,j}} \sum_{x_r} \text{Tr} \left( \sum_{x_{i},x_{j}} \sum^n_{\substack{k=1 \\ k\neq i,j}} \sum_{b_k} G(b_{1},\cdots,b_{i-1},x_{i},b_{i+1},\cdots,b_{j-1},x_{j},b_{j+1},\cdots,b_{n})  \right) \nonumber \\ 
&=& \sum^n_{\substack{r=1 \\ r\neq i,j}} \sum_{x_r} \text{Tr} \left(  \sum_{b_1, \cdots, b_n} G(b_{1},\cdots,b_{n})  \right) \nonumber \\ 
& = & \sum^n_{\substack{r=1 \\ r\neq i,j}} \sum_{x_r}   \text{Tr} (\I_{d\times d}) \nonumber \\
&=& d \prod^n_{\substack{y=1 \\ y\neq i,j}} d_y .
\eea
Hence, we get the first term in (\ref{bigsum}), given by
\begin{align}
&\sum_{x_{1},\cdots,x_{n}} \, \, \, \,  \sum_{\substack{i,j \in \{1, \cdots, n\}\\i<j}} \, \,  \big{|}\big{|} \sum^n_{\substack{k=1 \\ k\neq i,j}} \sum_{b_k} G(b_{1},\cdots,b_{i-1},x_{i},b_{i+1},\cdots,b_{j-1},x_{j},b_{j+1},\cdots,b_{n}) \big{|}\big{|} =  d  \sum_{\substack{i,j \in \{1, \cdots, n\}\\i<j}} \, \, \prod^n_{\substack{y=1 \\ y\neq i,j}} d_y 
\label{firstterm}
\end{align}

By substituting the bounds from \eqref{2ndterm}-\eqref{firstterm} into \eqref{SCexpand}, we obtain
\be
S_c(n,\Vec{d},d) \leqslant \frac{1}{n \prod_y d_y} \left( d \sum_{\substack{i,j \in \{1, \cdots, n\} \\i<j}} \left(\prod^n_{\substack{y=1 \\ y\neq i,j}} d_y \right) + \prod^n_{y=1} d_y \right) ,
\ee
which reduces to the first expression of Eq. \eqref{SC1}.

For the other bound, let us use the fact that in \eqref{chi}, only the term $G(x_1,x_2,\cdots,x_n)$ occurs $n$ times and all the other terms can occur, at most, $(n-1)$ times to get an upper bound on $\chi$ as follows
\be 
\chi \leqslant G(x_1,x_2,\cdots,x_n) + (n-1) \sum_{b_1,\cdots,b_n} G(b_1,\cdots,b_n) .
\ee 
Replacing this bound into \eqref{sumG} and employing the triangle inequality for the norm, we get
\be 
S_c(n,\Vec{d},d) \leqslant \frac{1}{n \prod_y d_y} \left( \sum_{x_{1},\cdots,x_{n}} \big{|}\big{|} G(x_{1},\cdots,x_{n})\big{|}\big{|} + (n-1) \sum_{x_{1},\cdots,x_{n}} \big{|}\big{|} \sum_{b_1,\cdots , b_n}G(b_{1},\cdots,b_{n})\big{|}\big{|}   \right) . 
\ee 
We already have a bound given by \eqref{2ndterm} on the second sum in the above equation. The first term is bounded by $d$ since 
\begin{align}
\sum_{x_{1},\cdots,x_{n}} \big{|}\big{|} G(x_{1},\cdots,x_{n})\big{|}\big{|} &\leqslant \sum_{x_{1},\cdots,x_{n}} \text{Tr} \left( G(x_{1},\cdots,x_{n}) \right) \nonumber \\
& = \text{Tr}\left( \sum_{x_1, \cdots, x_n} G(x_{1},\cdots,x_{n}) \right) \nonumber \\
&=\text{Tr}\left( \I_{d\times d} \right) \nonumber \\
&= d .
\end{align} 
Therefore, we arrive at
\be 
S_c(n,\Vec{d},d) \leqslant \frac{1}{n \prod_y d_y} \left( d + (n-1) \prod_y d_y \right),
\ee 
which reduces to the second expression of Eq. \eqref{SC1}. 
This completes the proof.
\end{proof}

\section{Proof of Result \ref{thm:graceb}} \label{app2}

In order to provide a detailed proof of Result \ref{thm:graceb}, 
we first state a general feature of communication tasks.
\begin{lemma}
Consider a general form of a linear function of $\{p(b_y|x,y)\}$,
\be \label{S}
S = \sum_{x,y,b_y} c_{x,y,b_y} p(b_y|x,y) .
\ee 
The maximum value of $S$ within $\mathcal{C}_d$, which we denote by $S_c$, is obtained by deterministic strategies and can be written only in terms of decoding function $\{p_b(b_y|y,m)\}$.
\end{lemma}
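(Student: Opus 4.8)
The plan is to maximize the linear functional $S$ over $\mathcal{C}_d$ in three stages: first discard the shared randomness using linearity, then reduce to deterministic encodings and decodings, and finally perform the encoding optimization in closed form so that only the decoding survives. Throughout I would use the $\mathcal{C}_d$-representation $p(b_y|x,y) = \sum_m \int_\lambda \pi(\lambda)\, p_a(m|x,\lambda)\, p_b(b_y|y,m,\lambda)\, d\lambda$ from Eq.~\eqref{pc}.

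First I would eliminate $\lambda$. For each fixed value of $\lambda$, the pair $(p_a(m|x,\lambda), p_b(b_y|y,m,\lambda))$ defines a no-shared-randomness strategy producing $p_\lambda(b_y|x,y) = \sum_m p_a(m|x,\lambda)\, p_b(b_y|y,m,\lambda)$, which already lies in $\mathcal{C}_d$. A generic element of $\mathcal{C}_d$ is the convex mixture $p = \int_\lambda \pi(\lambda)\, p_\lambda\, d\lambda$, so by linearity of $S$ one has $S(p) = \int_\lambda \pi(\lambda)\, S(p_\lambda)\, d\lambda \leqslant \sup_\lambda S(p_\lambda)$. Hence the maximum of $S$ over $\mathcal{C}_d$ is attained by a single strategy with no shared randomness, and I may drop $\lambda$ and write $p(b_y|x,y) = \sum_m p_a(m|x)\, p_b(b_y|y,m)$.

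Next I would deterministize both functions. Viewed as a function of the families $\{p_a(\cdot|x)\}_x$ and $\{p_b(\cdot|y,m)\}_{y,m}$, each ranging over an independent probability simplex, $S$ is separately affine in each block, and such a multi-affine function attains its maximum over a product of simplices at a product of vertices, i.e.\ at deterministic strategies $p_a(m|x)=\delta_{m,f(x)}$ and $p_b(b_y|y,m)=\delta_{b_y,g(y,m)}$. Concretely, regrouping as $S = \sum_{x,m} p_a(m|x)\big[\sum_{y,b_y} c_{x,y,b_y}\, p_b(b_y|y,m)\big]$ shows that, for fixed decoding, the optimal $p_a(\cdot|x)$ places all its weight on the $m$ maximizing the bracket; the symmetric regrouping does the same for the decoding. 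This establishes the first claim, that the optimum is reached by deterministic strategies. To obtain the expression in the decoding alone, I would keep $p_b$ general and carry out only the encoding optimization: with $m=f(x)$ one has $S = \sum_x \sum_{y,b_y} c_{x,y,b_y}\, p_b(b_y|y,f(x))$, and maximizing freely over $f\colon [l]\to[d]$ decouples across $x$, giving
\[
S_c \;=\; \max_{\{p_b(b_y|y,m)\}}\ \sum_x\ \max_{m\in[d]}\ \sum_{y,b_y} c_{x,y,b_y}\, p_b(b_y|y,m),
\]
an expression involving only the decoding function, as claimed; since the summand is a maximum of linear functions of $p_b$ and hence convex, the outer maximum is again attained at a deterministic $p_b$, consistently with the vertex argument.

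The steps are largely routine, so the main obstacle is a technical one rather than conceptual: ensuring the supremum over $\lambda$ in the first stage is genuinely attained (or handled through the closure of $\mathcal{C}_d$) when the shared randomness is continuous, and verifying that the encoding optimization in the last stage is performed after fixing a general decoding without any circular dependence. Both points are standard once one notes that the classical strategies on a finite alphabet generate a compact (indeed polytopal) set of probabilities, so all the maxima above are attained.
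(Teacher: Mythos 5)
Your proposal is correct and follows essentially the same route as the paper's proof: both use linearity of $S$ to discard the shared randomness $\lambda$, optimize the encoding pointwise to obtain $S_c = \max_{\{p_b(b_y|y,m)\}} \sum_x \max_m \sum_{y,b_y} c_{x,y,b_y}\, p_b(b_y|y,m)$, and then note that deterministic decoding suffices by convexity/vertex arguments. The only differences are cosmetic (you remove $\lambda$ before optimizing the encoding, whereas the paper does it in the reverse order) plus your added care about attainment of the supremum, which the paper glosses over.
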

\begin{proof}
Replacing the expression of $p(b_y|x,y)$ for classical communication given by Eq. \eqref{pc} 
into \eqref{S}, we see that
\begin{align} \label{simSCg}
S_c & = \max_{\substack{ \{p_a(m|x,\lambda)\}\\ \{p_b(b_y|y,m,\lambda)\}\\ \{\pi(\lambda)\}} }
\sum_x \bigg( \sum_{y,b_y} c_{x,y,b_y} p(b_y|x,y) \bigg) \nonumber \\
& =   \max_{\substack{ \{p_a(m|x,\lambda)\} \\ \{p_b(b_y|y,m,\lambda)\}\\ \{\pi(\lambda)\}} } \int_\lambda \pi(\lambda) \left[ \sum_{x} \left\{ \sum_m p_a(m|x,\lambda) \left(\sum_{y,b_y} c_{x,y,b_y} p_b(b_y|y,m,\lambda)\right) \right\} \right] d\lambda \nonumber \\
& =   \max_{\substack{ \{p_b(b_y|y,m,\lambda)\}\\ \{\pi(\lambda)\}} } \int_\lambda \pi(\lambda) \left[ \sum_{x} \max_m \left(\sum_{y,b_y} c_{x,y,b_y} p_b(b_y|y,m,\lambda)\right) \right] d\lambda . 
\end{align}
This is achieved when $p_a(m^*(\lambda,x) |x,\lambda) =1$, where for each $\lambda$, $m^*(\lambda,x)$ is defined as follows
\begin{align}
 \sum_{y,b_y} c_{x,y,b_y} p_b(b_y|y, m^*(\lambda,x),\lambda) \geq \sum_{y,b_y} c_{x,y,b_y} p_b(b_y|y,m,\lambda) \, \,  \, \forall \, \, m \in [d].
\end{align}
Now, the above expression (\ref{simSCg}) is a convex sum with respect to $\pi(\lambda)$ and thus, we can omit the dependence of $\lambda$ by taking the best value of $\left[\sum_{x} \max_m \left(\sum_{y,b_y} c_{x,y,b_y} p_b(b_y|y,m,\lambda)\right)\right]$ over different choices of $\lambda$ as follows:
\begin{align}
S_c =   \max_{ \{p_b(b_y|y,m)\} } \left[ \sum_{x} \max_m \left(\sum_{y,b_y} c_{x,y,b_y} p_b(b_y|y,m)\right) \right]. \label{simSCg3}
\end{align}
Therefore, it is sufficient to consider deterministic decoding, that is, $p_b(b_y|y,m) \in \{0,1\}$ to achieve $S_c$. Moreover, given any decoding strategy $\{ p_b(b_y|y,m) \}$, the best encoding function is 
\begin{align}
p_a(m^*|x) =1, \, \, \, \, \text{where} \, \, \,  \sum_{y,b_y} c_{x,y,b_y} p_b(b_y|y,m^*) \geq \sum_{y,b_y} c_{x,y,b_y} p_b(b_y|y,m) \, \,  \, \forall \, \, m \in [d].
 \label{besten}
\end{align}  
This completes the proof.
\end{proof} 

\begin{proof}[Proof of Result \ref{thm:graceb}]
The proof is essentially a generalization of the proof given in Section II-A of \cite{brac}, which was restricted for the particular case where $d=d_y$ for all $y$. We know from the above \textit{lemma} that the optimal encoding and decoding functions are deterministic. Thus, this can be written in a functional form as 
\be \label{Ex}
E(x_1\cdots x_n) = m \ \text{ if } p_a(m|x)=1 ,
\ee 
and
\be \label{Dy}
D_y(m) = b_y \ \text{ if } \ p_b(b_y|y,m)=1 .
\ee 
Here, $E(x_1\cdots x_n)$ is a function whose domain is the set of inputs $x = x_1 \cdots x_n$ and range is the set of messages  $[d]$.
Also, $D_y(m)$ is a function whose domain is the set of messages $[d]$ and range is the set $[d_y]$. We say the decoding strategy is `identity decoding', denoted by $\{\Tilde{D}_y\}$, if
\be \label{Dyt}
\forall y , \ \Tilde{D}_y(m) = m .
\ee 
We want to show that, without loss of generality we can take $\{\Tilde{D}_y\}$ for the maximum success probability. Consider an encoding $E(x)$ \eqref{Ex} and a decoding $\{D_y\}$ \eqref{Dy} that may not be $\{\Tilde{D}_y\}$, that is, there may exists $y$ such that $D_y(m)\neq m $. Let $D_y^{\leftarrow}(b_y)$ be the preimage of $b_y$, that is, $D_y^{\leftarrow}(b_y) = \{ m \in [d] : D_y(m)=b_y \}$.

Subsequently, we consider the following quantity 
\be 
D^{\leftarrow}_1(b_1) \cdots D^{\leftarrow}_n(b_n) = \{ m_1\cdots m_n : D_1(m_1)=b_1, \cdots, D_n(m_n) = b_n \} ,
\ee 
which is simply the set of dit-string $\{m_1\cdots m_n\}$ that is mapped to the sting $b_1 \cdots b_n$.
We define another encoding function $\{\Tilde{E}_x\}$ as follows
\be \label{Ext}
\Tilde{E}(D^{\leftarrow}_1(x_1) D^{\leftarrow}_2(x_2) \cdots D^{\leftarrow}_n(x_n)) = m \ \text{ if } E(x_1\cdots x_n) = m .
\ee 
The above definition of $\tilde{E}$ is not complete since it is not defined if $x_i \notin [d]$ since $D_y^{\leftarrow}(x_i) \in [d] $. In those cases, we take any encoding strategy. Now, we note that $\tilde{E}$ is a well-defined encoding function. Also note that $\tilde{E}$ is a valid encoding for the random access codes considered by us only if $d \leqslant \min_y d_y$. This is because, if  $d > d_y$ for some $y$, then the domain of $\tilde{E}$ may have a string of $n$ dits that does not belong to $x$.

Suppose, for any input pair $x_1\cdots x_n,y$ so that the encoding $E$ and decoding $\{D_y\}$ guesses the correct dit $x_y$. Hence, if the encoding strategy is given by, $E(x = x_1 \cdots x_n) = m$, then the decoding strategy is given by, $D_y(m) = b_y = x_y$. Therefore, we have $D^{\leftarrow}_y(x_y) = m$. As a consequence, the new encoding $\Tilde{E}_x$ and the `identify decoding'  $\{\Tilde{D}_y\}$ also provides the correct answer for at least one input pair from $\{D^{\leftarrow}_1(x_1) D^{\leftarrow}_2(x_2) \cdots D^{\leftarrow}_n(x_n)\},y$. Hence, the average success probability for the  strategy consisting of the encoding $\Tilde{E}_x$ and the `identify decoding'  $\{\Tilde{D}_y\}$ is greater than or equal to that for the strategy with encoding $E$ and decoding $\{D_y\}$. Therefore, we can consider `identity decoding' without loss of generality. 

Next, from Eq.(\ref{simSCg3}), the expression for $S_c$ pertaining to the random access codes for `identity decoding' can be written as
\begin{align}
    S_c = \frac{1}{n \prod_y d_y} \sum_{x} \max_m \left(\sum_{y} P(b_y = x_y|y,m) \right) = \frac{1}{n \prod_y d_y} \sum_{x} \max_m \left(\sum_{y} \delta_{x_y,m}\right),
\end{align}
and for the `identity decoding', the best encoding can be determined   from \eqref{besten} as follows
\begin{align}
    p_a(m^*|x) =1, \, \, \, \, \text{where} \, \, \,  \sum_{y} \delta_{x_y,m^*} \geq \sum_{y} \delta_{x_y,m} \, \,  \, \forall \, \, m \in [d].
\end{align}
Hence, the best encoding pertaining to the `identity decoding' is just sending the dit that belongs to $[d]$ and occurs maximum times in the input string $x_1\cdots x_n$.  

Finally, we provide an expression for $S_c$ for the best classical strategy derived above. In an input string $x_1\cdots x_n$, say, the dit $i$ occurs $n_i$ number of times. The maximum value of a dit can be $\max_y d_y$. Alice sends message $m$ such that $n_m = \max_{i=1,\cdots,d} n_i$. As a result, out of $n$ different values of $y$, they get success $(\max_{i=1,\cdots,d} n_i)$ times. As the total number of dits is $n$, the set of values of $n_i$ should satisfy 
\be \label{sumni}
\sum_{i=1}^{d_{\max} } n_i = n ,
\ee 
where $d_{\max} = \max_y d_y$.
Moreover, dit $i$ may not belong to all $[d_y]$ and thus, $n_i$ can not take all the solutions of the above equation.
Say, $k_i$ is the number of sets among $[d_1],\cdots,[d_n]$ such that dit $i\in [d_y]$. Therefore, we are only interested in those solutions where $n_i \leqslant k_i$.

Given such a solution of $\{n_i\}$, there will be many possible numbers of input dit strings $x$ having that $\{n_i\}$. Next, let us evaluate the number of input dit strings $x$ that can have an arbitrary $\{n_i\}$. In any input string, at most $k_{d_{\max}}$ number of input dits can have the value $d_{\max}$. 
In the given set of input dit strings having $\{n_i\}$, the dit $d_{\max}$ occurs $n_{d_{\max}}$ number of times. Hence, the dit $d_{\max}$ can be arranged in $C^{k_{d_{\max}}}_{n_{d_{\max}}}$ different possible ways. Next, in any input string, at most $k_{d_{\max}-1}$ number of input dits can have the value $(d_{\max}-1)$. 
However, among these $k_{d_{\max}-1}$ number of input dits, $n_{d_{\max}}$ number of dits have already taken the value $d_{\max}$ in the case of the given set of input strings. Also,  in the given set of input dit strings having $\{n_i\}$, the dit $(d_{\max}-1)$ occurs $n_{d_{\max}-1}$ number of times. Therefore, for any of the above-mentioned arrangements of the dit $d_{\max}$, the dit $(d_{\max}-1)$ can be arranged in $C^{k_{d_{\max}-1} - n_{d_{\max}}}_{n_{d_{\max}-1}}$ different possible ways. 
Proceeding in this way, it can be shown that an arbitrary dit $j$  can be arranged in $C^{\alpha_j}_{n_j}$ different possible ways with $\alpha_j = k_j - \sum_{i=j+1}^{d_{\max}} n_i$ for any arrangement of the dits- $d_{\max}$, $(d_{\max}-1)$, $\cdots$, $j+1$. Therefore, given any $\{n_i\}$, there will be $\left(\prod_{j=1}^{d_{\max}}  C^{\alpha_j}_{n_j} \right)$  (with $\alpha_j = k_j - \sum_{i=j+1}^{d_{\max}} n_i$) number of input dit strings having that $\{n_i\}$. Combining these facts we obtain Eq. \eqref{SCex}.
\end{proof}

\section{Derivation of Eq. (\ref{simSCex}) 
and Eq. (\ref{3_measurements})} \label{app3}

From Result \ref{thm:graceb}, 
we can write the following for $n=2$, $d_y=\od$ for all $y$, and $d\leqslant \od$,
\be \label{SCexapp}
S_c(2,\od,d) = \frac{1}{2 \od^2} \sum_{\{n_i\}\in \mathcal{S}} \left[ N_{\{n_i\}} \, \max_{i=1,\cdots,d}\{n_i\} \right]
\ee 
where $N_{\{n_i\}}$ is the number of input dit strings having a given $\{n_i\}$; and $\mathcal{S}$ denotes the set of $\{n_i\}$ satisfying 
\be 
\sum_{i=1}^{\od} n_i = 2
\label{conapp}
\ee 
such that $n_i \leqslant 2$ for all $i$.

Next, let us characterize the set $\mathcal{S}$. It can be noted that there are the following two types of $\{n_i\} \in \mathcal{S}$:  

\begin{enumerate}
    \item For each $i \in [\od]$, $n_i=2$ and $n_j=0$ for all $j\neq i$ and $j \in [ \od]$. 
    
    There are $\od$ number of such $\{n_i\} \in \mathcal{S}$. However, $\max_{i=1,\cdots,d}\{n_i\} = 0$ for each of those $\{n_i\} \in \mathcal{S}$ satisfying $n_i=2$ for any $i$ such that $i \in \{d+1, \cdots, \od\}$ and $n_j=0$ for all $j \in [\od]$ and $j\neq i$. Hence, only $d$ number of $\{n_i\} \in \mathcal{S}$ belonging to this class contribute to  the sum of (\ref{SCexapp}). It is straightforward to check that for each of these $d$ number of $\{n_i\} \in \mathcal{S}$, $N_{\{n_i\}}=1$ and $\max_{i=1,\cdots,d}\{n_i\}=2$. 
    
    \item For each $i,j \in [\od]$ with $i\neq j$, $n_i=n_j=1$ and $n_k=0$ for all $k \notin \{i,j\}$ with $k \in [\od]$. 
    
    There are $C^{\od}_2$ number of such $\{n_i\} \in \mathcal{S}$. However, $\max_{i=1,\cdots,d}\{n_i\} = 0$ for each of those $\{n_i\} \in \mathcal{S}$ satisfying $n_i=n_j=1$ for any $i,j$ with $i \neq j$, $i,j \in \{d+1, \cdots, \od\}$ and $n_k=0$ for all $k \in [\od]$, $k\neq i$, $k \neq j$. There are $C^{(\od-d)}_2$ number of such $\{n_i\} \in \mathcal{S}$ satisfying this. Hence, only $C^{\od}_2 - C^{(\od-d)}_2$ number of $\{n_i\} \in \mathcal{S}$ belonging to this second class  contribute to the sum of (\ref{SCexapp}). It can be checked that for each of these $C^{\od}_2 - C^{(\od-d)}_2$ number of $\{n_i\} \in \mathcal{S}$, $N_{\{n_i\}}=2$ and $\max_{i=1,\cdots,d}\{n_i\}=1$. 
\end{enumerate}

Therefore, we have from Eq. (\ref{SCexapp}) 
\begin{align} 
S_c(2,\od,d) &= \frac{1}{2 \od^2} \left[ 2 d + 2 \left(C^{\od}_2 - C^{(\od-d)}_2\right) \right] \nonumber \\
& = \frac{1}{2 \od^2} \left[  d + 2 d \od - d^2 \right].
\end{align}


Similarly, following the same analysis as above, we can get the expression for $n=3$, $d_y=\od$ for all $y$, and $d\leqslant \od$,
\be \label{sc_3}
S_c(3,\od,d) = \frac{1}{3 \od^3} \sum_{\{n_i\}\in \mathcal{S}} \left[ N_{\{n_i\}} \, \max_{i=1,\cdots,d}\{n_i\} \right]
\ee 
where $N_{\{n_i\}}$ is the number of input dit strings with a given $\{n_i\}$; and $\mathcal{S}$ denotes the set of $\{n_i\}$ satisfying 
\be 
    \sum_{i=1}^{\od} n_i = 3
    \label{conapp4}
\ee 
such that $n_i \leqslant 3$ for all $i$. Now there are three cases that satisfy Eq. (\ref{conapp4}): 

\begin{enumerate}

    \item For each $i \in [\od]$, $n_i=3$ and $n_j=0$ for all $j\neq i$ and $j \in [ \od]$. 
    
  There are $\od$ number of such $\{n_i\} \in \mathcal{S}$. However, $\max_{i=1,\cdots,d}\{n_i\} = 0$ for each of those $\{n_i\} \in \mathcal{S}$ satisfying $n_i=3$ for any $i$ with $i \in \{d+1, \cdots, \od\}$ and $n_j=0$ for all $j \in [\od]$ and $j\neq i$. Hence, only $d$ number of $\{n_i\} \in \mathcal{S}$ belonging to this class contribute to  the sum of (\ref{sc_3}). For each of these $d$ number of $\{n_i\} \in \mathcal{S}$, we have that $N_{\{n_i\}}=1$ and $\max_{i=1,\cdots,d}\{n_i\}=3$.  Hence, the contribution to the sum is $ 3 d$.
    
    \item For each $i,j,k \in [\od]$ with $ i\notin\{j,k\}$, $j \notin \{i,k\}$, $k \notin \{i,j\}$, $ n_{i}=n_{j}=n_{k}=1$ and $ n_{l} = 0$ for all $ l $ $\notin\{i,j,k\}$ and $l \in [\od]$. 
    
        There are $C^{\od}_3$ number of such $\{n_i\} \in \mathcal{S}$. Moreover, $\max_{i=1,\cdots,d}\{n_i\} = 0$ for each of those $\{n_i\} \in \mathcal{S}$ satisfying $n_i=n_j=n_k=1$ for any choice of $i,j,k$ with  $i,j,k \in \{d+1, \cdots, \od\}$, $ i\notin\{j,k\}$, $j \notin \{i,k\}$, $k \notin \{i,j\}$ and $n_l=0$ for all  $l \in [\od]$ and $ l $ $\notin\{i,j,k\}$. There are $C^{(\od-d)}_3$ number of such $\{n_i\} \in \mathcal{S}$ satisfying this. Thus, only $C^{\od}_3 - C^{(\od-d)}_3$ number of $\{n_i\} \in \mathcal{S}$ belonging to this class  contribute to the sum of (\ref{sc_3}). It can be checked that for each of these $C^{\od}_3 - C^{(\od-d)}_3$ number of $\{n_i\} \in \mathcal{S}$, $N_{\{n_i\}}=3!$ and $\max_{i=1,\cdots,d}\{n_i\}=1$. Therefore, the contribution to the sum will be $ (3!)\left ( C^{\od}_3 - C^{\od - d}_3 \right)$. 
    
    \item For each $i,j \in [\od]$ with $ i\neq j$, $ n_{i}=2$, $n_{j}=1$ and $ n_{k} = 0$ for all $k \notin \{i,j\}$ with $k \in [\od]$.
    
    The feasible solutions of Eq.(\ref{conapp4}) that contribute to Eq.(\ref{sc_3}) are of two types: 
    
    \textbf{(A)} $ i  \in [d] $ and $ j \in [\od] - \{i\}$. The number of possible such $\{n_i\} \in \mathcal{S}$ is given by, $ d(\od -1 ) $. Also, for each such $\{n_i\}$, we have that  $N_{\{n_i\}}= 3$ and $\max_{i=1,\cdots,d}\{n_i\} = 2$. Therefore, the contribution to the sum appearing in Eq. (\ref{sc_3}) by this case is $ 6 \, d(\od - 1)$.  
    
    {\bf (B)} $ i \in \{d+1, \cdots, \od\}$ and $j \in [d]$. The number of possible such $\{n_i\} \in \mathcal{S}$ is given by, $d(\od -d )$. And for each such $\{n_i\}$, we have that  $N_{\{n_i\}}= 3$ and $\max_{i=1,\cdots,d}\{n_i\} = 1$. Hence, the contribution to the sum  appearing in Eq. (\ref{sc_3}) for this case is given by, $ 3 \, d(\od - d) .$

    Therefore, the total contribution to the sum of Eq. (\ref{sc_3}) is given by, $ 6 \, d(\od - 1) + 3 \,d(\od - d) = 3\, d (3\, \od - d - 2)$.

\end{enumerate}

    Therefore, we have
\begin{align} 
    S_c(3,\od,d) & = \frac{1}{3\od^3}\left[(3!)\left ( C^{\od}_3 - C^{\od - d}_3 \right ) + 3\, d (3\, \od - d - 2)  + 3 d\right]  \nonumber \\
    & = \frac{d}{3\od^3} \left(d^2 - 1 + 3\od (\od + 1 - d) \right).
\end{align}

\section{Proof of Result \ref{thm:322}} \label{app:322}

Let us take three arbitrary orthonormal bases $\{|\psi_1^1\rangle, |\psi_2^1\rangle\}$, $\{|\psi_1^2\rangle, |\psi_2^2\rangle\}$, and $\{|\psi_1^3\rangle, |\psi_2^3\rangle\}$ in $\mathbbm{C}^2$ such that $M_{x_y|y} = |\psi_{x_y}^y\rangle\!\langle \psi_{x_y}^y|$ with $x_y \in [2]$ for all $y \in \{1,2,3\}$. A unitary can always be applied to these three measurements. Therefore, without any loss of generality, we can assume that 
\begin{align}
    &|\psi_{x_1}^1\rangle\!\langle \psi_{x_1}^1| = \frac{1}{2} \left[\I + (-1)^{x_1} \sigma_z \right] \, \, \, \, \text{with} \, \, x_1 \in [2], \\
    &|\psi_{x_2}^2\rangle\!\langle \psi_{x_2}^2| = \frac{1}{2} \left[\I + (-1)^{x_2} \left( \alpha \sigma_z + \sqrt{1-\alpha^2} \sigma_x \right) \right] \, \, \, \, \text{with} \, \, x_2 \in [2], \\
    &|\psi_{x_3}^3\rangle\!\langle \psi_{x_3}^3| = \frac{1}{2} \left[\I + (-1)^{x_3} \left(  \beta \sigma_z + \gamma \sqrt{1-\beta^2}  \sigma_x  \pm \sqrt{1-\beta^2} \sqrt{1-\gamma^2}  \sigma_y \right) \right]  \, \, \, \, \text{with} \, \, x_3 \in [2].
\end{align}
where $-1 \leq \alpha, \beta, \gamma \leq 1$.

Due to the same reasoning as for $\eqref{maxev}$, the maximum average success probability for the above-mentioned given set of three rank-one projective qubit measurements is given by, 
\be \label{maxevapp322}
S_c(n = 3,\od = 2,d=2)=\frac{1}{24} \ \sum_{ x_{1}, x_{2}, x_{3} =1}^{2} ||M_{x_{1}|1} + M_{x_{2}|2}+ M_{x_{3}|3}||.
\ee

By definition, $||M_{x_{1}|1} + M_{x_{2}|2}+ M_{x_{3}|3}||$ is the maximum eigenvalue of $(M_{x_{1}|1} + M_{x_{2}|2}+ M_{x_{3}|3})$, which can be evaluated easily. Subsequently, it can be checked that 
\begin{align}
 \sum_{ x_{1}, x_{2}, x_{3} =1}^{2}   ||M_{x_{1}|1} + M_{x_{2}|2}+ M_{x_{3}|3}|| =12 &+ \sqrt{3 + 2 \alpha - 2 \beta - 2 \alpha \beta - 2 \gamma \sqrt{1-\alpha^2} \sqrt{1-\beta^2}} \nonumber \\
 & + \sqrt{3 - 2 \alpha + 2 \beta - 2 \alpha \beta - 2 \gamma \sqrt{1-\alpha^2} \sqrt{1-\beta^2}} \nonumber \\
 &+ \sqrt{3 - 2 \alpha - 2 \beta + 2 \alpha \beta + 2 \gamma \sqrt{1-\alpha^2} \sqrt{1-\beta^2}} \nonumber \\
 &+ \sqrt{3 + 2 \alpha + 2 \beta + 2 \alpha \beta + 2 \gamma \sqrt{1-\alpha^2} \sqrt{1-\beta^2}}.
 \label{expfull}
\end{align}
We have found out the minimum of the above expression (\ref{expfull}) by performing numerical optimization. It turns out that 
\begin{align}
    \min_{\alpha, \beta, \gamma \in [-1,1]} \left(\sum_{ x_{1}, x_{2}, x_{3} =1}^{2}   ||M_{x_{1}|1} + M_{x_{2}|2}+ M_{x_{3}|3}||  \right) = 18.
\end{align}
 In other words,
 \begin{align}
    \min_{\alpha, \beta, \gamma \in [-1,1]} \left( \xi_1 + \xi_2 + \xi_3 + \xi_4  \right) = 6,
    \label{minapp}
\end{align}
where 
\begin{align}
    &\xi_1 = \sqrt{3 + 2 \alpha - 2 \beta - 2 \alpha \beta - 2 \gamma \sqrt{1-\alpha^2} \sqrt{1-\beta^2}}, \nonumber \\
    &\xi_2 = \sqrt{3 - 2 \alpha + 2 \beta - 2 \alpha \beta - 2 \gamma \sqrt{1-\alpha^2} \sqrt{1-\beta^2}}, \nonumber \\
    &\xi_3 = \sqrt{3 - 2 \alpha - 2 \beta + 2 \alpha \beta + 2 \gamma \sqrt{1-\alpha^2} \sqrt{1-\beta^2}}, \nonumber \\
    &\xi_4 = \sqrt{3 + 2 \alpha + 2 \beta + 2 \alpha \beta + 2 \gamma \sqrt{1-\alpha^2} \sqrt{1-\beta^2}}. \nonumber 
\end{align}
In order to prove Result \ref{thm:322}, 
it is sufficient to show that $\left( \xi_1 + \xi_2 + \xi_3 + \xi_4  \right) = 6$ only if the three projective measurements are compatible, or $(\alpha,\beta,\gamma ) = \{ (\pm 1/2,\pm 1/2,-1),(\pm 1/2,\mp 1/2,1)\} $. \\
Since $-1 \leq \alpha, \beta, \gamma \leq 1$, we divide the regions of $\alpha$, $\beta$ and  $\gamma$ into the following sub-regions:
\begin{enumerate}[label=\roman*.]
\centering
    \item $\alpha, \beta, \gamma \in [0,1]$,
    \item $\alpha \in [-1,0]$; $\beta, \gamma \in [0,1]$,
    \item $\alpha, \beta \in [-1,0]$; $\gamma \in [0,1]$,
    \item $\alpha, \beta, \gamma \in [-1,0]$,
    \item $\alpha, \gamma \in [0,1]$; $\beta \in [-1,0]$,
    \item $\alpha, \gamma \in [-1,0]$; $\beta \in [0,1]$,
    \item $\alpha, \beta \in [0,1]$; $\gamma \in [-1,0]$,
    \item $\alpha \in [0,1]$; $\beta, \gamma \in [-1,0]$.
\end{enumerate}
We start by considering the above-mentioned sub-region (i), i.e., $\alpha, \beta, \gamma \in [0,1]$. In this case, we note the following holds from numerical evaluation,
\begin{align}
 \min_{\alpha, \beta, \gamma \in [0,1]} \left( \xi_1 + \xi_2  \right) = 2,   
 \label{appmin1}
\end{align}
and 
\begin{align}
 \min_{\alpha, \beta, \gamma \in [0,1]} \left( \xi_3 \right) \geqslant \min_{\alpha, \beta \in [0,1]} \sqrt{3 - 2 \alpha - 2 \beta + 2 \alpha \beta} = 1,  
 \label{appmin2}
\end{align}
since the derivative of the above expression is zero at $\alpha =1$ or/and $\beta=1$.

Next, we evaluate the maximum as well as minimum of $(\xi_1+\xi_2+\xi_3)$ numerically under the constraint that $\left( \xi_1 + \xi_2 + \xi_3 + \xi_4  \right) = 6$. It is obtained that 
\begin{align}
   \min_{\alpha, \beta, \gamma \in [0,1]} \left( \xi_1 + \xi_2  +\xi_3 \right) = \max_{\alpha, \beta, \gamma \in [0,1]} \left( \xi_1 + \xi_2  +\xi_3 \right) = 3, \, \, \, \, \text{when} \, \,  \xi_1 + \xi_2 + \xi_3 + \xi_4  = 6.
\end{align}
Therefore, we have that
\begin{align}
    \xi_1 + \xi_2  +\xi_3 = 3, \, \, \, \, \text{when} \, \, \xi_1 + \xi_2 + \xi_3 + \xi_4   = 6.
    \label{appmin4}
\end{align}
Hence, the following is implied from (\ref{appmin1}), (\ref{appmin2}), (\ref{appmin4}),
\begin{align}
    \xi_1 + \xi_2  = 2, \, \, \, \,  \xi_3=1, \, \, \, \, \text{and} \, \, \, \, \xi_4 = 3, \, \, \, \, \text{when} \, \, \, \, \xi_1 + \xi_2 + \xi_3 + \xi_4   = 6.
\end{align}

Next, it can be checked that $\xi_3 = 1$ only if $\alpha =1$ or/and $\beta=1$.
Now, when $\alpha=1$, then $\xi_4 = 3$ implies that $\beta=1$. Similarly, when $\beta=1$, then $\xi_4 = 3$ implies that $\alpha=1$. 
Therefore, when $\alpha, \beta, \gamma \in [0,1]$, then $\left( \xi_1 + \xi_2 + \xi_3 + \xi_4  \right) = 6$ holds only if $\alpha = \beta = 1$. 

Next, consider the sub-region (iii), i.e., for $\alpha, \beta \in [-1,0]$; $\gamma \in [0,1]$. We note that if $\alpha \rightarrow - \alpha $ and $\beta \rightarrow -\beta $ then the four expressions $\xi_i$ interchange among themselves as we can readily verify $\xi_1 \rightarrow \xi_2$, $\xi_2 \rightarrow \xi_1$, $\xi_3 \rightarrow \xi_4$, $\xi_4 \rightarrow \xi_3$. 
Thus, following a similar calculation as for sub-region (i), we find that $\xi_1 + \xi_2 + \xi_3 + \xi_4  = 6$ holds only if $\alpha=\beta=-1$.
Similarly, for sub-regions (vi) and (viii), one can show that $\xi_1 + \xi_2 + \xi_3 + \xi_4  = 6$ holds only if $-\alpha=\beta=1$ and $\alpha = -\beta=1$, respectively. \\


Next, let us focus on the sub-region (iv), i.e., when $\alpha, \beta, \gamma \in [-1,0]$. In this case, we obtain the following by performing numerical optimizations,
\begin{align}
&\min_{\alpha, \beta, \gamma \in [-1,0]} \left( \xi_1 + \xi_4 \right) = \max_{\alpha, \beta, \gamma \in [-1,0]} \left( \xi_1 + \xi_4 \right) = 2, \, \, \, \, \text{when} \, \,  \xi_1 + \xi_2 + \xi_3 + \xi_4  = 6, \nonumber \\ 
&\min_{\alpha, \beta, \gamma \in [-1,0]} \left( \xi_2 + \xi_4 \right) = \max_{\alpha, \beta, \gamma \in [-1,0]} \left( \xi_2 + \xi_4 \right) = 2, \, \, \, \, \text{when} \, \,  \xi_1 + \xi_2 + \xi_3 + \xi_4  = 6, \nonumber \\ 
&\min_{\alpha, \beta, \gamma \in [-1,0]} \left( \xi_1 + \xi_3 \right) = \max_{\alpha, \beta, \gamma \in [-1,0]} \left( \xi_1 + \xi_3 \right) = 4, \, \, \, \, \text{when} \, \,  \xi_1 + \xi_2 + \xi_3 + \xi_4  = 6, \nonumber \\ 
&\min_{\alpha, \beta, \gamma \in [-1,0]} \left( \xi_2 + \xi_3 \right) = \max_{\alpha, \beta, \gamma \in [-1,0]} \left( \xi_2 + \xi_3 \right) = 4, \, \, \, \, \text{when} \, \,  \xi_1 + \xi_2 + \xi_3 + \xi_4  = 6. \nonumber  
\end{align}
Hence, we can infer that whenever $\xi_1 + \xi_2 + \xi_3 + \xi_4  = 6$,
\be 
 \xi_1 + \xi_4  = \xi_2 + \xi_4  = 2, \quad   \xi_1 + \xi_3  = \xi_2 + \xi_3  = 4 .
\ee 
Therefore, we have $\xi_1 = \xi_2$ if $\xi_1 + \xi_2 + \xi_3 + \xi_4  = 6$. Also, it can be easily checked from the expressions of $\xi_1$, $\xi_2$, $\xi_3$ and $\xi_4$ that 
\begin{align}
    \xi_1^2 + \xi_2^2 + \xi_3^2 + \xi_4^2  = 12.
    \label{newappse}
\end{align}
By putting $\xi_1 = \xi_2 = \xi$, $\xi_3 = 4-\xi$, $\xi_4=2-\xi$, we get from (\ref{newappse}) that
\begin{align}
    2 \xi^2 + (4-\xi)^2 + (2-\xi)^2 = 12.
\end{align}
The possible solutions of the above equation are $\xi=1$ and $\xi=2$.\\
Before proceeding, let us point out the following observations that can be checked numerically,
\begin{align}   
\min_{\alpha, \beta, \gamma \in [-1,0]} \left( \xi_1 \right) = \min_{\alpha, \beta, \gamma \in [-1,0]} \left( \xi_2 \right) = 1.    
\label{minappnewew}
\end{align}
First, we take $\xi=1$. It can be shown that $\alpha, \beta, \gamma \in [-1,0]$, $\xi_1=\xi_2=1$ only if $\alpha=-1$ and $\beta=-1$. 
Next, let us take  $\xi=2$. Consequently, we have that $\xi_1=\xi_2=2$, $\xi_3=2$, $\xi_4=0$. It can be checked that the unique solution of these four equations is given by, $\alpha=-1/2$, $\beta=-1/2$, $\gamma=-1$. 

Next, we remark that for the remaining sub-regions (ii), (v), and (vii) wherein the variables $\alpha, \beta, \gamma$ changes their signs with respect to the sub-region (iv) where $\alpha,\beta, \gamma \in [-1,0]$, the four expressions $\xi_i$ interchange among themselves. Thus, a similar calculation applies to these three regions, and, consequently, the solution for  $\xi_1 + \xi_2 + \xi_3 + \xi_4  = 6$ is the same with the appropriate signs.


Finally, let us note that there are, in general, two cases where we do not observe any advantage. Firstly, $\alpha = \pm 1$ and $\beta = \pm 1$, which implies that the three measurements $\{M_{x_1|1}\}$, $\{M_{x_2|2}\}$ and $\{M_{x_3|3}\}$ are compatible. Secondly, $(\alpha,\beta,\gamma ) = \{ (\pm 1/2,\pm 1/2,-1),(\pm 1/2,\mp 1/2,1)\}$, which are obtained in sub-regions (iv), (ii), (v), and (vii), implies that the three measurements are incompatible.  This completes the proof.



\end{document}